\documentclass[11pt,a4paper]{article}

\usepackage[utf8]{inputenc}
\usepackage[T1]{fontenc}
\usepackage{lmodern}
\usepackage{graphicx,amssymb,latexsym,amsmath,amsfonts,subfigure,epsf,bm,mathrsfs}
\usepackage{graphicx}
\usepackage{geometry}
\usepackage{hyperref}
\usepackage{natbib}

\newtheorem{proposition}{Proposition}
\newtheorem{proof}{Proof}
\newtheorem{remark}{Remark}
\newtheorem{theorem}{Theorem}
\newcommand{\eps}{\varepsilon}
\newcommand{\no}{\noindent}

\def\bfrho{\mbox{\boldmath$\varrho$}}

\newcommand{\ww}{\operatorname{\mathbf{w}}}

\title{Nonclassical Turing instabilities induced by superdiffusive transport in FitzHugh–Nagumo dynamics}

\author{
	Rossella Rizzo$^{1}\quad $ 
	Gaetana Gambino$^{2}\quad$
	Vincenzo Sciacca$^{2}\quad$ 
	Marco Sammartino$^{1}$
}

\date{}

\begin{document}
	
	\maketitle
	
	\begin{center}
		\small
		$^{1}$Department of Engineering, University of Palermo, Viale delle Scienze, Ed. 8, Palermo, 90128, Italy\\
		$^{2}$Department of Mathematics, University of Palermo, Via Archirafi 34, Palermo, 90123, Italy
	\end{center}
	
	\begin{abstract}
		We investigate diffusion-driven instabilities in a FitzHugh–Nagumo reaction–diffusion system with superdiffusive transport, modeled by fractional Laplacian operators with different diffusion orders for the activator and the inhibitor.
		
		A linear stability analysis yields explicit expressions for the instability threshold and the critical wavenumber and shows that superdiffusion modifies the band of unstable modes and the characteristic spatial scale of emerging patterns. We show that the threshold depends only on the ratio of the fractional exponents and on the kinetic parameters, while the spatial scale is controlled by the diffusion orders and the domain size.
		
		When the diffusion orders differ, spatial instabilities may occur even in regimes where the activator diffuses faster than the inhibitor, due to the combined effect of diffusion rates, anomalous scaling and system size. This leads to instability mechanisms that depart from the classical activator–inhibitor framework.
		
		A weakly nonlinear analysis near threshold provides the amplitude equation governing nonlinear saturation and reveals that superdiffusion promotes subcritical behavior. We also analyze the interaction between stationary and oscillatory instabilities near Turing–Hopf codimension-two points. All analytical results are supported by numerical simulations. 
			\end{abstract}
	
	\section{Introduction}
\label{intro}
In this work we investigate the effects of superdiffusive transport on diffusion--driven instabilities in a FitzHugh--Nagumo reaction--diffusion system, with particular emphasis on the role of different fractional diffusion orders for the activator and the inhibitor. Our goal is to assess how nonlocal transport mechanisms modify both the onset and the nonlinear development of spatial instabilities in a paradigmatic activator--inhibitor model.

The FitzHugh--Nagumo (FHN) model was originally introduced by FitzHugh \cite{FH1960} and subsequently validated by Nagumo \textit{et al.} \cite{NAY1962} as a two--variable reduction of the Hodgkin--Huxley equations for nerve membrane excitability. Due to its balance between mathematical simplicity and dynamical richness, the FHN system has become a canonical model for excitable media and has been widely used to describe a large class of phenomena, including neuronal and cardiac excitability, chemical reactions and other biological and physical systems \cite{CPRG2024}. Depending on parameter values, FHN--type equations can reproduce qualitatively different regimes such as excitability, oscillations and bistability \cite{MDBE1997,GLRS2024-1}. Moreover, FHN models and closely related activator--inhibitor systems have also been employed to describe prey--predator dynamics in ecological systems \cite{MPTML2002,BT2005,TB1994}.

In many complex or heterogeneous media, classical (Fickian) diffusion is often inadequate to describe transport processes, and anomalous diffusion, characterized by non--Gaussian statistics or long--range jumps, becomes relevant \cite{KBS1987,METZLER2000}. Fractional derivatives provide an effective and widely used framework to encode these nonlocal effects in space or time and fractional reaction--diffusion systems have therefore emerged as a powerful modeling tool for anomalous transport in biological tissues, porous materials and neural media \cite{METZLER2000,GMV2008,IWL2017,KZF2022}.

\subsection{Anomalous diffusion and fractional derivatives}

Classical diffusion is characterized by a linear growth of the mean square displacement \\
$\langle (\Delta \mathbf{x}(t))^2 \rangle \propto t$. In contrast, anomalous diffusion is characterized by the more general scaling:
\begin{equation}
	\langle (\Delta \mathbf{x}(t))^2 \rangle \propto t^\alpha, \qquad \alpha \in (0,2],
\end{equation}
where $\alpha$ is the anomalous diffusion exponent. We speak of subdiffusion for $0<\alpha<1$ and of superdiffusion for $1<\alpha<2$ \cite{KBS1987,METZLER2000,GMV2008,ALVES2016,IWL2017}. For $\alpha=1$ one recovers normal diffusion, while $\alpha=2$ corresponds to ballistic motion. Both regimes play an important role in physical, chemical, biological, and geological systems. Subdiffusion typically arises in systems with long waiting times, such as transport in porous media or crowded biological environments \cite{AMYPL1996,DZ1999,WEISS2003}, whereas superdiffusion is associated with long--range jumps and is observed, for instance, in plasmas, turbulent flows and charge carrier transport in semiconductors \cite{SQS1993,CLZ2001,TONER2005,ZYM2009,MARSHALL2014}. A paradigmatic example of superdiffusion is provided by L\'evy flights \cite{METZLER2000,MK2004,uchaikin2013}.

At the macroscopic level, classical diffusion is described by the standard Laplace operator $\Delta=\partial_{xx}+\partial_{yy}+\partial_{zz}$. 
On the other hand, L\'evy flights of a substance $u$ are described by the fractional Laplacian $\Delta^{\mu/2}\equiv \nabla^\mu$, defined in Fourier space as:
\begin{equation}
	\Delta^{\mu/2} u = \mathcal{F}^{-1}\!\left[-|\mathbf{k}|^\mu \, \mathcal{F}[u]\right],
\end{equation}
where $\mathcal{F}$ denotes the Fourier transform and $\mathcal{F}^{-1}$ its inverse and $0<\mu\le 2$ \cite{METZLER2000,GMV2008,IWL2017,KZF2022}.

A particular class of superdiffusive processes, such as L\'evy flights, corresponds to random walks whose jump--length distribution has infinite second moment, so that the mean squared displacement is formally divergent. Nevertheless, by introducing suitable cut--offs or scaling arguments based on fractional moments, one can define a pseudo mean squared displacement that characterizes the typical spread of the process \cite{BOUCHAUD1990,SZK1993,FOGEDBY1994,KSZ1996,METZLER2000}. 
In this work we focus on the superdiffusive regime $1<\mu\le 2$, for which the typical displacement scales as $|\mathbf{x}(t)|_{\mathrm{typ}}\sim t^{1/\mu}$ and the pseudo mean squared displacement scales as $t^{2/\mu}$. Since the jump--length distribution obeys $P(|Y|>r)\sim r^{-\mu}$, smaller values of $\mu$ correspond to heavier tails and a higher probability of long jumps.

\subsection{Superdiffusion in reaction--diffusion systems}

The introduction of anomalous or superdiffusive transport in reaction--diffusion systems has attracted considerable attention over the last two decades, motivated by the need to describe pattern--forming systems embedded in complex or heterogeneous environments where long--range transport processes cannot be neglected \cite{GD2008,GAFIYCHUK2010,GMV2008,ZT2014,IWL2017,KZF2022}. In this context, fractional diffusion operators provide a natural framework to model nonlocal transport mechanisms in spatially extended systems.

The presence of superdiffusive transport modifies both the linear stability properties and the nonlinear pattern selection mechanisms of classical reaction–diffusion models. In particular, since fractional diffusion alters the balance between short- and long-range interactions, it can significantly affect the standard activator–inhibitor paradigm underlying diffusion-driven instabilities.

Several works have already explored these ideas in specific settings. Golovin \textit{et al.} \cite{GMV2008} analyzed a fractional Brusselator model and showed that diffusion--driven instabilities may occur even when the inhibitor diffuses slower than the activator, provided it has a higher diffusion exponent. Zhang and Tian \cite{ZT2014} and Iqbal \textit{et al.} \cite{IWL2017} studied fractional FitzHugh--Nagumo systems assuming identical diffusion exponents for the two species, mainly by numerical means. Khudhair \textit{et al.} \cite{KZF2022} investigated pattern formation in fractional Schnakenberg systems under the same assumption of equal diffusion orders. Numerical continuation studies have further shown that fractional diffusion can modify bifurcation structures in nonlinear PDEs \cite{EKS2021}

Most of the scientific interest in Lévy-type diffusion stems from its ability to capture effective long-range transport in complex systems. Representative examples include neural communication in large-scale brain networks, characterized by highly nonlocal synaptic connectivity and preferential displacements driven by foraging strategies or predator–prey interactions in ecological systems \cite{VABMPS1996,VBHLRS1999,SRHSWMT2014}.

Although action potentials propagate locally along axons, neural networks contain long-range synaptic connections, so that the effective propagation of activity across neural populations may exhibit features analogous to long-jump processes.

Lévy walks have also been widely hypothesized to represent optimal search strategies when resources are sparse or randomly distributed, a hypothesis supported by empirical studies on bees, albatrosses, deer \cite{VABMPS1996,VBHLRS1999}, and marine predators \cite{LKW1988,SSH2008,SRHSWMT2014}. From an evolutionary perspective, Lévy-like strategies provide higher encounter rates in prey-scarce environments, whereas Brownian motion is more efficient in prey-rich conditions \cite{VBHLRS1999,SSH2008}.

Levandowsky et al. \cite{LKW1988} proposed Lévy-distributed steps as an optimal strategy for microzooplankton foraging in sparse environments, while Viswanathan et al. \cite{VABMPS1996,VBHLRS1999} and Sims et al. \cite{SSH2008} reported power-law flight-length distributions and Lévy-walk-like behavior across insects, mammals, birds, and marine predators. More recently, fossil trail analyses revealed signatures of Lévy-type search patterns in extinct marine species \cite{SRHSWMT2014}.

Despite these advances, a systematic analytical understanding of diffusion--driven instabilities in FitzHugh--Nagumo systems with {different} superdiffusion orders for the activator and the inhibitor is still lacking. In particular, explicit analytical results for the instability threshold, the critical wavenumber and the nonlinear saturation of the patterns are still missing. Also the extent to which anomalous transport can modify the classical activator--inhibitor paradigm remains only partially explored.

\subsection{Outline and main results}

In this work we analyze diffusion-driven instabilities in a FitzHugh--Nagumo system with fractional transport and heterogeneous diffusion orders.

We first derive explicit analytical expressions for the instability threshold and the critical wavenumber. We show that, when the diffusion orders differ, the threshold depends only on their ratio and on the kinetic parameters, while being independent of their individual values. This property provides a natural framework for classifying instability regimes in anomalous reaction--diffusion systems.

We then demonstrate that fractional transport allows for nonclassical instability mechanisms. In particular, spatial instabilities may arise even in regimes where the activator diffuses faster than the inhibitor, as a consequence of the interplay between diffusion rates, anomalous scaling, and domain size. This behavior departs from the classical short-range activation and long-range inhibition paradigm.

Next, we perform a weakly nonlinear analysis near the instability threshold and derive the corresponding amplitude equation. We show that superdiffusion strongly influences nonlinear saturation and systematically promotes subcritical behavior, thereby enhancing finite-amplitude effects.

Finally, we analyze the interaction between stationary and oscillatory instabilities near Turing--Hopf codimension--two points and characterize the resulting spatio-temporal dynamics.

All analytical results are supported by numerical simulations.

The paper is organized as follows. In Sec.~\ref{sec_model} we introduce the fractional FitzHugh--Nagumo  system that will be studied throughout the paper and fix the notation. In Sec.~\ref{linearinstabilityanalysis} we analyze the homogeneous FitzHugh--Nagumo system and classify the different dynamical regimes in parameter space, and we then study the linear stability of the spatially extended system and derive the conditions for the onset of diffusion--driven instabilities. In Sec.~\ref{WNL} we perform a weakly nonlinear analysis near the Turing threshold and derive the corresponding amplitude equation. In Sec.~\ref{TH} we study the dynamics near codimension--two Turing--Hopf bifurcation points. Finally, Sec.~\ref{conclusions} is devoted to a discussion of the results and to perspectives for future work.

\section{The fractional FHN model}\label{sec_model}

We consider a generalized FHN system coupled with superdiffusive transport in a one--dimensional spatial domain $\Omega=[0,\ell]$:
\begin{align}
	\frac{\partial U}{\partial \tau} &= U(B - AU^2) - CV + EUV + D_U \, \partial^{\alpha_1}_{\tilde x} U, \label{sys_dim_1} \\
	\frac{\partial V}{\partial \tau} &= FU - GV - H + D_V \, \partial^{\alpha_2}_{\tilde x} V. \label{sys_dim_2}
\end{align}
The system is considered together with prescribed initial conditions and periodic boundary conditions.

To derive the dimensionless form of the system, we introduce the following nondimensional variables:
\begin{equation}
	\label{scales}
	\tilde{t}=\frac{\ell^{\alpha_1}}{D_U}  {\tau} ,\quad \tilde{x}=\ell x,\quad \tilde{u}=\sqrt{\frac{B}{A}},\quad \tilde{v}= \frac{B}{C}\sqrt{\frac{B}{A}},
\end{equation}
which lead to the following dimensionless formulation:
\begin{align}
	\frac{\partial u}{\partial t} &= \Gamma(u(1-u^2)-(1-\beta u)v)+\partial^{\alpha_1}_{x}u, \label{origsyst_1} \\
	\frac{\partial v}{\partial t} &= \Gamma(\varepsilon(\gamma u-v-a))+d\partial^{\alpha_2}_{x}v. \label{origsyst_2}
\end{align}
The dimensionless parameters are defined as
\begin{equation}
	\label{parameter_adim}
	\begin{split}
		&\Gamma=\frac{B\ell^{\alpha_1}}{D_U},\quad \beta=\frac{E}{C}\sqrt{\frac{B}{A}},\quad\varepsilon=\frac{G}{B},\\
		&\gamma=\frac{FC}{BG},\quad a=\frac{HC}{BG}\sqrt{\frac{A}{B}},\quad d=\frac{D_V}{D_U}\ell^{\alpha_1-\alpha_2}.
	\end{split}
\end{equation}

In system \eqref{origsyst_1}--\eqref{origsyst_2}, $u(x,t)$ and $v(x,t)$ represent the activator and inhibitor species, respectively, and the constants $\alpha_1$ and $\alpha_2$ denote the fractional diffusion exponents associated with each species. In this work we focus on the superdiffusive regime and assume
\begin{equation}
	\label{gamma1-2cond}
	1<\alpha_i\leq 2, \qquad i=1,2.
\end{equation}
When $\alpha_1=\alpha_2=2$, the system reduces to the classical FHN model with standard diffusion \cite{GLRS2019}. If $\alpha_1=\alpha_2$, the parameter $d$ represents the ratio of the diffusion coefficients of $v$ and $u$. In the more general case $\alpha_1\neq\alpha_2$, the parameter $d$ also depends on the size of the spatial domain due to the different scaling behavior of the diffusion operators.

The parameter $\varepsilon$ denotes the ratio of the characteristic timescales of the two species, while $\gamma$ and $a$ control the number and position of the nullcline intersections. The parameter $\beta\in\mathbb{R}$ breaks the symmetry of the system under the transformation $(u\rightarrow -u, v\rightarrow -v, a\rightarrow -a)$, see \cite{DBBM2001,BMBD2001,BDWDB2002,TSB2012}. We choose $|\beta|<1$ so that the typical shape of the classical FHN $u$--nullcline is preserved and not significantly affected by the presence of the vertical asymptote $u=1/\beta$ in the phase plane $(u,v)$. Under this condition, if $\beta>0$ the asymptote lies to the right of the cubic, while if $\beta<0$ it lies to the left.

In the rest of the paper, and in accordance with previous literature \cite{DBBM2001,BMBD2001,BDWDB2002,TSB2012}, we restrict to $0\leq\beta<1$. The analysis for $-1<\beta\leq 0$ is analogous and leads to similar results. The parameter $\Gamma>0$ quantifies the strength of the reaction terms and depends on the domain size. All other kinetic parameters are nonnegative, except $a$, which may take both positive and negative values. Therefore, throughout the paper we assume
\begin{equation}
	\label{cond_parameters}
	a\in\mathbb{R}, \quad 0\leq \beta<1, \quad \gamma>0, \quad \varepsilon>0, \quad \Gamma>0, \quad d>0.
\end{equation}

Finally, we introduce the ratio of the two fractional exponents
\begin{equation}
	\label{gammacond}
	\alpha=\frac{\alpha_1}{\alpha_2}, \qquad \text{so that} \qquad \frac{1}{2}<\alpha<2.
\end{equation}

\section{Linear stability analysis}
\label{linearinstabilityanalysis}

\setcounter{equation}{0}

In this Section, we first review the conditions for the existence and stability of a unique steady state $E^*$ of the FHN system in the absence of diffusion. We then derive the conditions for the onset of a superdiffusion-driven Turing bifurcation.

\subsection{Local dynamics}

Let us consider the system \eqref{origsyst_1}–\eqref{origsyst_2} without spatial effects:
\begin{align}
	\frac{d u}{d t}&=\Gamma(u(1-u^2)-(1-\beta u)v),\label{kin_1}\\
	\frac{d v}{d t}&=\Gamma(\varepsilon(\gamma u-v-a)). \label{kin_2}
\end{align}
Depending on the values of the parameters $\beta$, $\gamma$, and $a$, the nullclines of the system \eqref{kin_1}–\eqref{kin_2} may intersect in different ways, leading to distinct dynamical regimes \cite{HM1994,GGLR2022,CPRG2024}. In particular, the system may exhibit the following mutually exclusive cases: (i) a unique stable equilibrium located on the inner branch of the $u$--nullcline (monostable case), (ii) a unique stable equilibrium on one of the outer branches (excitable case), or (iii) three equilibria, two stable and one unstable (bistable case).

In the present work we restrict our analysis to parameter regimes for which the system admits a {unique} equilibrium point. This choice is motivated by the fact that both the Turing instability analysis and the subsequent weakly nonlinear theory are local bifurcation theories formulated in the neighborhood of a single homogeneous steady state. In bistable regimes, where two stable equilibria coexist, diffusion--driven instabilities could in principle occur around more than one equilibrium, leading to mode interactions and resonance phenomena and possibly to the emergence of more complex structures such as superlattice patterns \cite{DBBM2001,TSB2012}. The analysis of such scenarios requires a different theoretical framework and is beyond the scope of the present paper.

In the following, we summarize the conditions on the parameters under which the system \eqref{kin_1}–\eqref{kin_2} admits a unique equilibrium and falls into cases (i) or (ii). First, in Proposition~\ref{propexistencemonoeq}, we provide the conditions ensuring the existence of a unique equilibrium point $E^* = (u^*, v^*)$. We assume that
\begin{equation}\label{us_on_the_left}
	1-\beta u^*> 0,
\end{equation}
so that the equilibrium point lies to the left of the vertical asymptote $u=1/\beta$ of the $u$--nullcline in the phase plane $(u,v)$.

\begin{proposition}[Existence of a unique equilibrium] \label{propexistencemonoeq}
	Consider the system \eqref{kin_1}–\eqref{kin_2} and define
	\begin{equation}\label{c_m1} 
		p := a\beta + \gamma - 1 - \frac{\beta^2 \gamma^2}{3}, 
	\end{equation}  
	\begin{equation} 
		q := \frac{\beta\gamma(a\beta + \gamma - 1)}{3} - \frac{2\beta^3 \gamma^3}{27}-a. 
	\end{equation}
	The system admits a unique equilibrium point $E^* = (u^*, v^*)$, where $v^* = \gamma u^* - a$, if and only if
	\begin{equation}
		\label{cond_uniqueeq}
		4p^3+27q^2>0.
	\end{equation}
\end{proposition}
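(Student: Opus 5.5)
The plan is to reduce the equilibrium problem to a single real cubic in $u$ and then apply the classical discriminant criterion for cubics. At an equilibrium of \eqref{kin_1}–\eqref{kin_2}, since $\Gamma>0$ and $\varepsilon>0$, the second equation forces $v=\gamma u-a$. This already gives the relation $v^*=\gamma u^*-a$ in the statement, and it shows that equilibria are in bijection with the real roots $u$ of the first equation after substitution. Substituting into $u(1-u^2)-(1-\beta u)v=0$ and expanding, I expect to obtain, after multiplying by $-1$,
\begin{equation*}
P(u):=u^3-\beta\gamma\,u^2+(a\beta+\gamma-1)\,u-a=0 .
\end{equation*}
No division by $1-\beta u$ occurs, so assumption \eqref{us_on_the_left} plays no role in counting roots.

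Next I would remove the quadratic term with the Tschirnhaus shift $u=y+\beta\gamma/3$, which turns $P$ into the depressed cubic $y^3+py+q$. With $b=-\beta\gamma$, $c=a\beta+\gamma-1$ and $e=-a$, the standard formulas $p=c-b^2/3$ and $q=2b^3/27-bc/3+e$ should reproduce exactly the expressions \eqref{c_m1} and the stated $q$. This is a routine check of the coefficients. The shift is a bijection of $\mathbb{R}$, so the number of real equilibria equals the number of real roots of $y^3+py+q$.

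Finally I would invoke the discriminant $\Delta=-(4p^3+27q^2)$ of the depressed cubic. If $\Delta<0$, that is if \eqref{cond_uniqueeq} holds, there is exactly one real root, which is simple, and a pair of complex conjugate roots. This gives sufficiency. If $\Delta>0$, there are three distinct real roots, and hence three equilibria. If $\Delta=0$ with $(p,q)\neq(0,0)$, there is a double real root together with a distinct simple real root, hence two equilibria. This gives necessity in these cases.

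The only delicate point, which I expect to be the main obstacle for the literal ``if and only if'', is the degenerate case $p=q=0$. There the cubic has a single triple real root, so the equilibrium is unique but \eqref{cond_uniqueeq} fails. I would handle this by reading ``unique equilibrium'' as a unique nondegenerate (simple) equilibrium, which is the setting needed for the subsequent linear stability analysis, or equivalently by noting that this codimension-two point is excluded. Under that reading the equivalence holds exactly.
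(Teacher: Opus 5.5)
Your proof is correct and follows the argument the paper relies on; the paper itself defers the proof to the Appendix of the cited reference. The stated $p$ and $q$ are exactly the coefficients of the depressed cubic obtained from $P(u)=u^3-\beta\gamma u^2+(a\beta+\gamma-1)u-a$ through $u=y+\beta\gamma/3$, and the number of real equilibria follows from the sign of $4p^3+27q^2$.

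Your caveat is also justified, since $p=q=0$ occurs for admissible parameters: e.g.\ $a=\beta=0$, $\gamma=1$ gives $P(u)=u^3$, a unique equilibrium $(0,0)$, yet $4p^3+27q^2=0$. Your ``unique simple equilibrium'' reading is the natural repair, because $\det(J)=\varepsilon P'(u^*)$, so simplicity of the root is exactly nonsingularity of the Jacobian.
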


For the proof of Proposition~\ref{propexistencemonoeq}, we refer the reader to the Appendix of \cite{GGLR2022}.

To analyze the stability of the steady state, we linearize the system \eqref{kin_1}–\eqref{kin_2} around $E^*$:
\begin{equation}\label{lin_kin}
	\dot{\mathbf{w}}=J \mathbf{w}, \quad {\rm with}\quad
	\mathbf{w}=\left(\begin{array}{c}
		u-u^*\\v-v^*
	\end{array}\right),
\end{equation}
\begin{equation}  
	J = \left(\begin{array}{cc}
		\varepsilon_H &  -(1-\beta u^*)\\
		\eps\gamma & -\eps
	\end{array}\right),
	\label{J}
\end{equation}
where
\begin{equation}\label{epsH} 
	\varepsilon_H := 1 + \beta v^* - 3u^{*2}. 
\end{equation}
This yields the characteristic equation
\begin{equation}\label{disp_cin} 
	\lambda^2 - \Gamma \mathrm{tr}(J) \lambda + \Gamma^2 \mathrm{det}(J) = 0, 
\end{equation}
with
\begin{align}
	\rm{tr}(J)&=-\varepsilon+\varepsilon_H, \label{linearstabilityconditions1}\\
	\det(J)&=\varepsilon \left[-\varepsilon_H+\left(1-\beta u^*\right)\gamma \right].
	\label{linearstabilityconditions2}
\end{align}
Stability requires $\mathrm{tr}(J) < 0$ and $\mathrm{det}(J) > 0$.

Let $v(u)= u(1-u^2)/(1-\beta u)$ denote the $u$--nullcline. The two extrema of this cubic--like curve divide it into three branches: the {inner branch}, located between the two extrema, and the two {outer branches}, located on either side. Since $\varepsilon_H = v'(u^*)/(1-\beta u^*)$ and condition \eqref{us_on_the_left} holds, the sign of $\varepsilon_H$ coincides with the sign of $v'(u^*)$. Therefore, $\varepsilon_H>0$ corresponds to an equilibrium located on the inner branch of the $u$--nullcline, while $\varepsilon_H<0$ corresponds to an equilibrium located on one of the outer branches.

Building upon this, in Propositions \ref{propkinstabmono}, \ref{prophopf}, and \ref{propkinstabexci}, we identify the parameter regimes for which the unique equilibrium $E^*$ corresponds to a monostable configuration, undergoes a Hopf bifurcation, or belongs to the excitable regime.

\begin{proposition}{(Monostable case)}\label{propkinstabmono}
	Let the assumptions of Proposition~\ref{propexistencemonoeq} hold. If
	\begin{equation}\label{stab_mono} 
		0 < \varepsilon_H < \min\left( \varepsilon, \gamma(1 - \beta u^*) \right),
	\end{equation}
	then the unique equilibrium point $E^* = (u^*, v^*)$ lies on the inner branch of the $u$--nullcline and is linearly stable.
\end{proposition}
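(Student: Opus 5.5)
The plan is to read off both assertions directly from the sign of $\varepsilon_H$ and from the trace and determinant formulas \eqref{linearstabilityconditions1}–\eqref{linearstabilityconditions2}. Proposition~\ref{propexistencemonoeq} guarantees that $E^*$ is the unique equilibrium. By the standing assumption \eqref{us_on_the_left}, $1-\beta u^*>0$.

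\emph{Location on the inner branch.} The left inequality $\varepsilon_H>0$ is used together with the identity $\varepsilon_H = v'(u^*)/(1-\beta u^*)$, which follows by differentiating $v(u)=u(1-u^2)/(1-\beta u)$ and substituting $v^*=v(u^*)$. Since $1-\beta u^*>0$, this gives $v'(u^*)>0$. On the region $u<1/\beta$, the $u$--nullcline is increasing only between its two extrema. Hence $E^*$ lies on the inner branch, exactly as in the remark preceding the proposition.

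\emph{Linear stability.} The eigenvalues of the linearization are the roots of \eqref{disp_cin}, with $\Gamma>0$. By the Routh–Hurwitz criterion for quadratics, both roots have negative real part if and only if $\mathrm{tr}(J)<0$ and $\det(J)>0$. The two conditions follow from the right inequality in \eqref{stab_mono}:
\begin{itemize}
\item From $\varepsilon_H<\varepsilon$ we get $\mathrm{tr}(J)=\varepsilon_H-\varepsilon<0$.
\item From $\varepsilon_H<\gamma(1-\beta u^*)$ and $\varepsilon>0$ (by \eqref{cond_parameters}) we get $\det(J)=\varepsilon[\gamma(1-\beta u^*)-\varepsilon_H]>0$.
\end{itemize}
The eigenvalues of $\Gamma J$ are $\Gamma$ times those of $J$, so the scaling by $\Gamma$ does not affect the signs.

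The only mildly delicate point is the first step: checking that positivity of $v'(u^*)$ on $u<1/\beta$ really characterizes the inner branch for $0\le\beta<1$. I would handle it by noting that $v'(u)$ has numerator $(1-3u^2)(1-\beta u)+\beta u(1-u^2)$. This expression is positive at $u=0$ and negative for $|u|$ large within $u<1/\beta$, so its sign changes occur exactly at the two extrema.
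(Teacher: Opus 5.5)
Your proposal is correct and follows the paper's argument. The paper likewise reads off the inner-branch location from $\varepsilon_H>0$, using $\varepsilon_H=v'(u^*)/(1-\beta u^*)$ and \eqref{us_on_the_left}, and gets stability from $\mathrm{tr}(J)<0$ and $\det(J)>0$ via \eqref{linearstabilityconditions1}--\eqref{linearstabilityconditions2}. Your extra check of the branch characterization, which the paper takes for granted, is sound once you note that the numerator simplifies to $1-3u^2+2\beta u^3$ and equals $1-\beta^{-2}<0$ at $u=1/\beta$; for $\beta>0$ it is this endpoint value, not large $|u|$, that gives the sign change to the right of $u=0$.
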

Indeed, since $\varepsilon_H>0$, the equilibrium $E^*$ lies on the inner branch of the $u$--nullcline. Moreover, under condition \eqref{stab_mono} one has $\mathrm{tr}(J)<0$, see \eqref{linearstabilityconditions1}, and $\det(J)>0$, see \eqref{linearstabilityconditions2}, so that $E^*$ is linearly stable.

\begin{proposition}{(Hopf bifurcation)}\label{prophopf}
	Assume that the hypotheses of Proposition~\ref{propexistencemonoeq} hold and that the equilibrium $E^*$ lies on the inner branch of the $u$--nullcline. If
	\[
	\varepsilon_H = \varepsilon < \gamma(1-\beta u^*),
	\]
	then $E^*$ undergoes a Hopf bifurcation.
\end{proposition}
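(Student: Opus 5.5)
The plan is to apply the standard Hopf bifurcation theorem for planar systems to \eqref{kin_1}–\eqref{kin_2}, with $\varepsilon$ as bifurcation parameter and all other kinetic parameters fixed. The equilibrium branch $E^*(\varepsilon)$ does not depend on $\varepsilon$. Indeed, the equilibrium equations are $v^*=\gamma u^*-a$ together with $u^*(1-u^{*2})=(1-\beta u^*)(\gamma u^*-a)$, and $\varepsilon$ cancels from the second equation. Hence $u^*$, $v^*$ and $\varepsilon_H$ in \eqref{epsH} are independent of $\varepsilon$. By Proposition~\ref{propexistencemonoeq} the equilibrium is unique, so there is no issue of a branch switching or coalescing.

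The eigenvalue conditions follow from the characteristic equation \eqref{disp_cin}. At $\varepsilon=\varepsilon_H$ we have $\mathrm{tr}(J)=0$ by \eqref{linearstabilityconditions1}. From \eqref{linearstabilityconditions2}, $\det(J)=\varepsilon[\gamma(1-\beta u^*)-\varepsilon_H]>0$, because $\varepsilon=\varepsilon_H>0$ (the equilibrium lies on the inner branch) and $\varepsilon_H<\gamma(1-\beta u^*)$ by hypothesis. The eigenvalues are therefore purely imaginary, $\lambda=\pm i\omega_0$ with $\omega_0=\Gamma\sqrt{\det J}>0$. For $\varepsilon$ near $\varepsilon_H$ the determinant stays positive, so the eigenvalues remain complex conjugate with real part $\Gamma(\varepsilon_H-\varepsilon)/2$.

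Transversality is then immediate: $\frac{d}{d\varepsilon}\mathrm{Re}\,\lambda=-\Gamma/2\neq0$. Equivalently, if $\varepsilon_H$ is taken as the parameter, the derivative is $+\Gamma/2$. A pair of complex conjugate eigenvalues therefore crosses the imaginary axis with nonzero speed, and $E^*$ changes from stable (for $\varepsilon>\varepsilon_H$) to unstable (for $\varepsilon<\varepsilon_H$), as in Proposition~\ref{propkinstabmono}.

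The main obstacle is nondegeneracy: the first Lyapunov coefficient $\ell_1$ must be nonzero for the classical theorem to give a genuine Hopf bifurcation with a unique limit cycle family. My first approach is to show that the statement only asserts the linear Hopf conditions, that is, a pair of eigenvalues crossing the imaginary axis transversally. These are what the paper uses later for the Turing–Hopf analysis. If a nondegenerate bifurcation is required, I would translate $E^*$ to the origin and put $J$ in real Jordan form. I would then compute $\ell_1$ from the quadratic terms ($-3u^*w_1^2+\beta w_1w_2$) and the cubic term ($-w_1^3$) of the $u$-equation using the standard formula, and verify $\ell_1\neq0$ generically. Its sign then determines whether the bifurcation is supercritical or subcritical.
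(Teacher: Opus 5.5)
Your proposal is correct and follows the same route as the paper. The paper notes only that at $\varepsilon_H=\varepsilon$ one has $\mathrm{tr}(J)=0$ and $\det(J)=\varepsilon[\gamma(1-\beta u^*)-\varepsilon_H]>0$, so the eigenvalues $\pm i\Gamma\sqrt{\det J}$ are purely imaginary and cross the axis; you additionally make explicit what it leaves implicit, namely that $E^*$ and $\varepsilon_H$ do not depend on $\varepsilon$ and that $\frac{d}{d\varepsilon}\mathrm{Re}\,\lambda=-\Gamma/2\neq0$, and since the paper never computes the first Lyapunov coefficient, your reading of the statement as the linear crossing condition is exactly what it proves.
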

In this case $\mathrm{tr}(J)=0$ and $\det(J)>0$, so that a pair of purely imaginary eigenvalues crosses the imaginary axis and the equilibrium $E^*$ undergoes a Hopf bifurcation.

\begin{proposition}{(Excitable case)}\label{propkinstabexci}
	Assume that the hypotheses of Proposition~\ref{propexistencemonoeq} hold. If
	\[
	\varepsilon_H < 0,
	\]
	then the unique equilibrium point $E^* = (u^*, v^*)$ lies on one of the outer branches of the $u$--nullcline and is linearly stable.
\end{proposition}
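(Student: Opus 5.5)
The argument parallels the one given after Proposition~\ref{propkinstabmono}: we combine the geometric interpretation of the sign of $\varepsilon_H$ with the signs of the trace and determinant in \eqref{linearstabilityconditions1}--\eqref{linearstabilityconditions2}. Throughout we use the standing assumptions \eqref{cond_parameters} and \eqref{us_on_the_left}, together with the existence of the unique equilibrium $E^*$ given by Proposition~\ref{propexistencemonoeq}.

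\emph{Location on an outer branch.} We have $\varepsilon_H = v'(u^*)/(1-\beta u^*)$, where $v(u)=u(1-u^2)/(1-\beta u)$ is the $u$--nullcline. To confirm this identity, differentiate $v$ and use the equilibrium relation $v^*(1-\beta u^*) = u^*(1-u^{*2})$:
\begin{equation*}
v'(u^*)=\frac{(1-3u^{*2})(1-\beta u^*)+\beta u^*(1-u^{*2})}{(1-\beta u^*)^2}=\frac{1-3u^{*2}+\beta v^*}{1-\beta u^*}.
\end{equation*}
Since $1-\beta u^*>0$, the hypothesis $\varepsilon_H<0$ gives $v'(u^*)<0$. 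The nullcline is increasing exactly on the inner branch, between its two extrema, so $u^*$ must lie on one of the outer branches.

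\emph{Trace.} By \eqref{linearstabilityconditions1}, $\mathrm{tr}(J)=\varepsilon_H-\varepsilon$. This is negative because $\varepsilon_H<0$ and $\varepsilon>0$.

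\emph{Determinant.} By \eqref{linearstabilityconditions2}, $\det(J)=\varepsilon\bigl[-\varepsilon_H+\gamma(1-\beta u^*)\bigr]$. Here $-\varepsilon_H>0$, and $\gamma(1-\beta u^*)>0$ because $\gamma>0$ and \eqref{us_on_the_left} holds. Since also $\varepsilon>0$, we get $\det(J)>0$.

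\emph{Conclusion.} The characteristic equation \eqref{disp_cin} has coefficients $-\Gamma\,\mathrm{tr}(J)>0$ and $\Gamma^2\det(J)>0$, with $\Gamma>0$. By the Routh--Hurwitz criterion both roots therefore have negative real part, so $E^*$ is linearly stable.

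The only step that is not immediate is the geometric claim that $v'(u^*)<0$ forces $E^*$ onto an outer branch. This requires that, on the region $u<1/\beta$, the nullcline $v(u)$ have exactly two critical points, with $v'>0$ between them and $v'<0$ outside them. This is the \emph{typical cubic shape} the paper assumes for $|\beta|<1$. To check it, note that the numerator of $v'$ is the cubic
\begin{equation*}
N(u)=(1-3u^2)(1-\beta u)+\beta u(1-u^2)=2\beta u^3-3u^2+1,
\end{equation*}
and one can verify that for $0\le\beta<1$ it has exactly two roots below $1/\beta$. If the paper intends this geometric description merely as a restatement of $\varepsilon_H<0$, as the discussion preceding the propositions suggests, this verification can be omitted.
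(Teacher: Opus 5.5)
Your proof is correct and is essentially the paper's argument: the sign of $\varepsilon_H$ matches that of $v'(u^*)$ because $1-\beta u^*>0$, and then $\mathrm{tr}(J)=\varepsilon_H-\varepsilon<0$ and $\det(J)=\varepsilon[\gamma(1-\beta u^*)-\varepsilon_H]>0$; your check of the nullcline's cubic shape is a valid detail that the paper leaves implicit. One harmless slip: your displayed computation actually gives $\varepsilon_H=(1-\beta u^*)\,v'(u^*)$, not $v'(u^*)/(1-\beta u^*)$ as you state (following the paper), but only the sign is used, so the argument is unaffected.
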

Indeed, if $\varepsilon_H<0$, the equilibrium $E^*$ lies on one of the outer branches of the $u$--nullcline. Moreover, one has $\mathrm{tr}(J)<0$ and $\det(J)>0$, which implies linear stability.

In Fig.~\ref{cond_regimes} we illustrate the different dynamical regimes of the local FHN system \eqref{kin_1}–\eqref{kin_2} in the $(a,\gamma)$ parameter plane, namely the monostable (black region), excitable (dark grey region), and bistable regimes (light grey region). The figure also highlights the role of the parameters $\varepsilon$ and $\beta$: decreasing $\varepsilon$ may destabilize the unique equilibrium through a Hopf bifurcation as predicted by Proposition~\ref{prophopf}, see the white area in Fig.~\ref{cond_regimes}(b). On the other hand, a nonzero value of $\beta$ breaks the symmetry of the system and renders the boundaries between the different regimes asymmetric with respect to the line $a=0$, see Fig.~\ref{cond_regimes}(c).

\begin{figure*}
	\centering
	\subfigure[]{\includegraphics[width=0.32\textwidth]{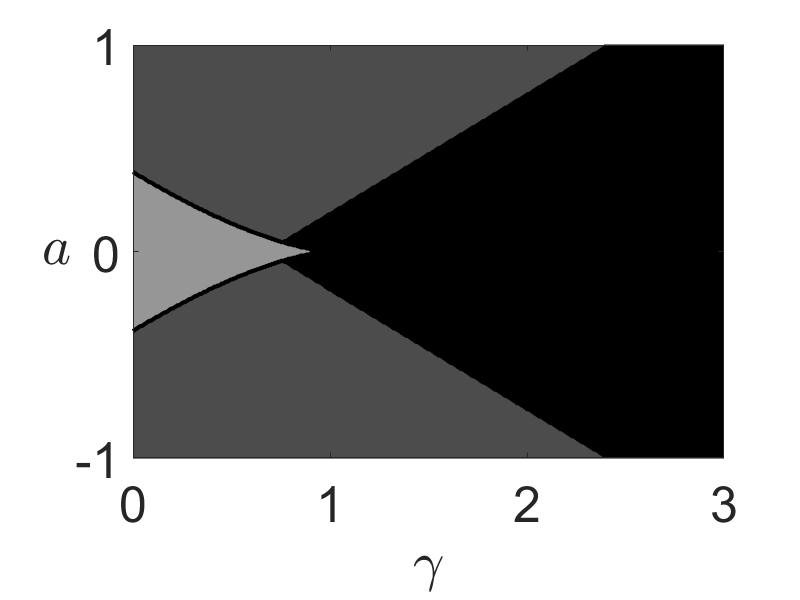}
		\label{beta=0_varepsilon=2}}
	\subfigure[]{\includegraphics[width=0.32\textwidth]{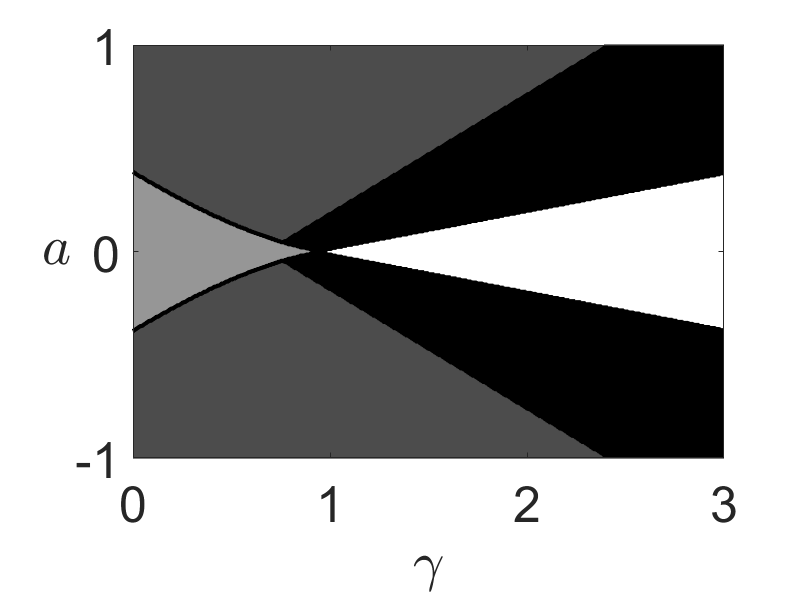}\label{beta=0_varepsilon=0_9}}
	\caption{Different dynamical regimes of the local FitzHugh--Nagumo system \eqref{kin_1}–\eqref{kin_2} in the $(a,\gamma)$ plane. The black, dark gray, and light gray regions correspond to the monostable, excitable, and bistable regimes, respectively. The white region indicates the parameter values for which the unique equilibrium loses stability through a Hopf bifurcation. (a) $\beta=0$, $\varepsilon=2$. (b) $\beta=0$, $\varepsilon=0.9$.} 
	\label{cond_regimes}
\end{figure*}

\subsection{Diffusion driven instability}

In this Section, by means of a linear stability analysis, we investigate under which conditions the equilibrium point $E^*$, which is linearly stable in the absence of diffusion, may lose its stability in the presence of spatial perturbations through a Turing bifurcation. Specifically, we will show that in the excitable case, i.e. under the assumptions of Proposition~\ref{propkinstabexci}, such a bifurcation cannot occur.
In the monostable case, corresponding to the assumptions of Proposition~\ref{propkinstabmono}, we will show that $E^*$ may lose stability through a Turing mechanism. We will determine the bifurcation threshold $d_c$ and the corresponding critical wavenumber $k_c$.

To investigate the linear stability of the equilibrium $E^*$, we substitute the normal mode ansatz $\mathbf{w}=\mathbf{W}_k e^{\sigma t + i k x}$ into \eqref{origsyst_1}--\eqref{origsyst_2}, where $\sigma$ is the growth rate of the perturbation, $k$ is the associated wavenumber, and $\mathbf{w}$ is defined in \eqref{lin_kin}. This yields the following eigenvalue problem:

\begin{equation}\label{lin_sys}
	\sigma \mathbf{W}_k=\mathcal{L}\mathbf{W}_k, \quad {\rm with}\quad
	\mathcal{L}=\Gamma J-D_k,
\end{equation}

\begin{equation} \quad {\rm and}  \quad D_k=\left(\begin{array}{cc}
		k^{\alpha_1}&0\\0&dk^{\alpha_2}
	\end{array}\right).
	\label{vectsyst}
\end{equation}

\no where $J$ is given in \eqref{J}. The eigenvalue $\sigma$, regarded as a function of the wavenumber $k$, satisfies the following dispersion relation:

\begin{equation}
	\sigma^2+g\left(k\right)\sigma + h\left(k\right)=0,
	\label{dispersionrelation}
\end{equation}
where:
\begin{eqnarray}
	g\left(k\right)&=&-{\rm tr}(\Gamma J-D_k)\label{g(k)}\\
	\ &=&k^{\alpha_1}+dk^{\alpha_2}+\Gamma(\varepsilon-\varepsilon_H),\nonumber\\
	h\left(k\right)&=&\det(\Gamma J-D_k)\label{h(k)}\\
	\ &=&(k^{\alpha_1}-\Gamma \varepsilon_H)(dk^{\alpha_2}+\Gamma\varepsilon)+\Gamma^2\varepsilon\gamma(1-\beta u^*).\nonumber
\end{eqnarray}
We recall that a Turing instability of the homogeneous equilibrium $E^*$ occurs when, for some mode $k \neq 0$, the corresponding eigenvalue $\sigma(k)$ is real and crosses zero from negative to positive values.

We can now state the following Theorems~\ref{propturininstabexci} and~\ref{propturininstabmono}.

\begin{theorem}{\bf{(No Turing instability in the excitable regime)}}\label{propturininstabexci}
	Consider the system \eqref{origsyst_1}–\eqref{origsyst_2}, and suppose that the assumptions of Propositions \ref{propexistencemonoeq} and \ref{propkinstabexci} are satisfied so that the equilibrium point $E^*$, located on an outer branch of the $u$-nullcline, is linearly stable for the system \eqref{kin_1}-\eqref{kin_2}. Then, the equilibrium $E^*$ of the system \eqref{origsyst_1}–\eqref{origsyst_2} cannot undergo a Turing bifurcation.
\end{theorem}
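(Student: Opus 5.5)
The plan is to show that, under the hypotheses of Proposition~\ref{propkinstabexci}, both coefficients of the dispersion relation \eqref{dispersionrelation} stay strictly positive for every wavenumber $k>0$. Then no eigenvalue $\sigma(k)$ can be real and cross zero. Recall that a Turing bifurcation requires some $k\neq 0$ at which a real root $\sigma(k)$ of \eqref{dispersionrelation} passes through $0$. Since $\sigma=0$ is a root exactly when $h(k)=0$, it suffices to prove $h(k)>0$ for all $k>0$. We will also check $g(k)>0$, so that both roots have negative real part and the equilibrium remains stable against all spatial modes.

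First, I will use the standing assumptions. We have $\varepsilon_H<0$ (excitable case), $\varepsilon>0$, $\Gamma>0$, $d>0$, $\gamma>0$, and $1-\beta u^*>0$ from \eqref{us_on_the_left}. For $k>0$ and $1<\alpha_1,\alpha_2\le 2$, both $k^{\alpha_1}$ and $k^{\alpha_2}$ are positive. This holds for any orders, so the argument will not depend on $\alpha_1\ne\alpha_2$ or on the ratio $\alpha$.

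Second, I will examine $g(k)=k^{\alpha_1}+dk^{\alpha_2}+\Gamma(\varepsilon-\varepsilon_H)$. Since $\varepsilon-\varepsilon_H>\varepsilon>0$, each term is positive, so $g(k)>0$.

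Third, I will examine $h(k)=(k^{\alpha_1}-\Gamma\varepsilon_H)(dk^{\alpha_2}+\Gamma\varepsilon)+\Gamma^2\varepsilon\gamma(1-\beta u^*)$. The first factor is $k^{\alpha_1}+\Gamma|\varepsilon_H|>0$ and the second is positive. The last term is positive because $\gamma>0$ and $1-\beta u^*>0$. Hence $h(k)>0$ for all $k\ge 0$.

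With $g>0$ and $h>0$, both roots of the quadratic \eqref{dispersionrelation} have negative real part. In particular no real root vanishes, which rules out the Turing mechanism; I will state this conclusion explicitly in the proof.

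There is no real obstacle; the only point needing care is the sign of $1-\beta u^*$. It is positive by the standing assumption \eqref{us_on_the_left}, which I will invoke explicitly. I will also remark that the argument is unchanged for periodic boundary conditions, where $k$ ranges over a discrete set, since positivity holds for every $k$.
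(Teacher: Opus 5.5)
Your proposal is correct and matches the paper's proof: under $\varepsilon_H<0$, together with $1-\beta u^*>0$ (which you rightly make explicit), both $g(k)$ and $h(k)$ are positive for every $k>0$, so every root of the dispersion relation has negative real part and no real eigenvalue can cross zero. Your justification via the positive coefficients of the quadratic (Routh--Hurwitz) is slightly more precise than the paper's appeal to Descartes' rule of signs, since it also covers complex roots.
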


\begin{proof}
	Since the assumptions of Proposition \ref{propkinstabexci} hold, we have $\varepsilon_H < 0$. As a result, both $g(k)$ and $h(k)$, defined in \eqref{g(k)} and \eqref{h(k)}, are positive for all $k \in \mathbb{R}^+$. By Descartes' rule of signs, the roots of the dispersion relation \eqref{dispersionrelation} must have negative real parts for all $k \in \mathbb{R}^+$, and thus no instability can occur.
\end{proof}

\begin{theorem}{\bf{(Turing instability in the monostable regime)}}\label{propturininstabmono} 
	Consider the system \eqref{origsyst_1}–\eqref{origsyst_2}, and suppose that the assumptions of Propositions \ref{propexistencemonoeq} and \ref{propkinstabmono} are satisfied so that the equilibrium point $E^*$, located on the inner branch of the $u$-nullcline, is linearly stable for the system \eqref{kin_1}-\eqref{kin_2}. Then, the steady state $E^*$ of the system \eqref{origsyst_1}–\eqref{origsyst_2} exhibits a Turing bifurcation at $d=d_c$, where:

		\begin{equation}\label{dc}
			d_c=\frac{\varepsilon\Gamma\left((1-\beta u^*)\gamma(1-\alpha)+\sqrt{\delta}\right)}{\left({\Gamma}\left(\varepsilon_H-\frac{\gamma}{2}(1+\alpha)(1-\beta u^*)+\frac{\sqrt{\delta}}{2}\right)\right)^{1/{\alpha}}\left(\gamma(1+\alpha)(1-\beta u^*)-\sqrt{\delta}\right)}, 
		\end{equation}
		and: 
		\begin{equation}\label{delta}
			\delta=(\gamma(1+\alpha)(1-\beta u^*)-2\varepsilon_H)^2+4\varepsilon_H(\gamma(1-\beta u^*)-\varepsilon_H).
		\end{equation}

	At $d=d_c$ the critical wavenumber is:
	
	\begin{equation}\label{kc}
		k_c=\left({\Gamma}\left(\varepsilon_H-\frac{\gamma}{2}(1+\alpha)(1-\beta u^*)+\frac{\sqrt{\delta}}{2}\right)\right)^{1/{\alpha_1}}.
	\end{equation}
\end{theorem}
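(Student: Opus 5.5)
Throughout, write $c:=1-\beta u^*>0$. The first observation is that the Turing mechanism can only act through $h(k)$. Under the hypothesis \eqref{stab_mono} of Proposition~\ref{propkinstabmono} we have $\varepsilon_H<\varepsilon$, so $g(k)=k^{\alpha_1}+dk^{\alpha_2}+\Gamma(\varepsilon-\varepsilon_H)>0$ for every $k\ge 0$. Hence a real eigenvalue of \eqref{dispersionrelation} crosses zero exactly when $h(k)$ changes sign. The threshold $d_c$ is therefore characterized as the value of $d$ at which $\min_{k>0}h(k)=0$, i.e.\ by the double-root conditions $h(k_c)=0$ and $h'(k_c)=0$.

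To make these conditions tractable despite the two different exponents, I will change variable to $s=k^{\alpha_1}$, so that $k^{\alpha_2}=s^{1/\alpha}$. Then
\[
h=(s-\Gamma\varepsilon_H)(D+\Gamma\varepsilon)+\Gamma^2\varepsilon\gamma c,\qquad D:=d\,s^{1/\alpha}.
\]
Since $s\mapsto k$ is a monotone bijection on $(0,\infty)$, tangency in $k$ is the same as tangency in $s$.
\begin{itemize}
\item Setting $\partial_s h=0$ gives $D\bigl(1+\tfrac{s-\Gamma\varepsilon_H}{\alpha s}\bigr)=-\Gamma\varepsilon$, that is,
\[
D=\frac{\Gamma\varepsilon\alpha s}{\Gamma\varepsilon_H-(1+\alpha)s}.
\]
\item This yields $D+\Gamma\varepsilon=\Gamma\varepsilon(s-\Gamma\varepsilon_H)/((1+\alpha)s-\Gamma\varepsilon_H)$.
\item Substituting into $h=0$, the parameter $d$ drops out and we obtain the quadratic
\[
(s-\Gamma\varepsilon_H)^2+\Gamma\gamma c\bigl((1+\alpha)s-\Gamma\varepsilon_H\bigr)=0 .
\]
\end{itemize}
The quadratic depends on the exponents only through $\alpha$, which explains the structure of the statement. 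Its roots are
\[
s=\Gamma\Bigl(\varepsilon_H-\tfrac{\gamma}{2}(1+\alpha)c\pm\tfrac{\sqrt{\delta}}{2}\Bigr),
\]
and a direct expansion shows that the discriminant divided by $\Gamma^2$ is exactly $\delta$ as in \eqref{delta}.

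Next comes root selection and admissibility, which I expect to be the main obstacle. One needs $s>0$ and $d>0$. The quadratic forces $(1+\alpha)s-\Gamma\varepsilon_H<0$, so $D>0$ automatically for any real root. For positivity of $s$, take the $+$ root. Because $\varepsilon_H(\gamma c-\varepsilon_H)>0$ by \eqref{stab_mono}, we get $\delta>(\gamma(1+\alpha)c-2\varepsilon_H)^2$, hence $\sqrt\delta>\gamma(1+\alpha)c-2\varepsilon_H$ and $s>0$. For the $-$ root one has to show it is either negative or not the relevant tangency. I would argue via the product of the roots, $\Gamma^2\varepsilon_H(\varepsilon_H-\gamma c)<0$, so the $-$ root is negative. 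One can also check $\delta<\gamma^2(1+\alpha)^2c^2$, since the difference equals $-4\alpha\varepsilon_H\gamma c<0$. This gives $\gamma(1+\alpha)c-\sqrt\delta>0$, so the denominators in \eqref{dc} are positive. Then $k_c=s^{1/\alpha_1}$ gives \eqref{kc}.

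Finally, $d_c=D/s^{1/\alpha}$. I will compute $\Gamma\varepsilon_H-(1+\alpha)s$ for the chosen root and simplify $\alpha s/(\Gamma\varepsilon_H-(1+\alpha)s)$ by rationalizing with $\gamma(1+\alpha)c\mp\sqrt\delta$, using the quadratic relation. The target is $\varepsilon(\gamma c(1-\alpha)+\sqrt\delta)/(\gamma(1+\alpha)c-\sqrt\delta)$, which yields \eqref{dc}; this is routine but error-prone algebra.

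To confirm that $d_c$ is a genuine bifurcation, I would note that for fixed $s\in(0,\Gamma\varepsilon_H)$ the function $h$ is strictly decreasing in $d$. Hence $\min_s h$ is decreasing in $d$: it is positive for $d<d_c$, zero at $d_c$ (at $k_c$), and negative on a band around $k_c$ for $d>d_c$. Moreover, the tangency point is unique because the admissible root is unique.
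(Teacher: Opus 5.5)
Your proposal is correct and follows the paper's proof in all essentials. You use the same tangency conditions $h=h'=0$ and arrive at the same quadratic: the paper's $\tilde h$ in the variable $z^{*\alpha}=k^{\alpha_1}$, which is your $s$, the only change being that you eliminate $d$ through $h'=0$ rather than through $h=0$. You then select the unique positive root from the negative constant term, exactly as the paper does.

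The one real difference is the final step. You certify the threshold by monotonicity of $h$ in $d$ plus uniqueness of the tangency, whereas the paper computes $h''(z^*)>0$. Your argument is slightly stronger because it is global: it only needs the easy facts that $\min_s h$ is continuous in $d$ and positive for small $d$, so that an intermediate-value step rules out any other tangency. The paper's second-derivative test, by contrast, is purely local.
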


\begin{proof}
	We seek a wavenumber $k \in \mathbb{R}^+$ such that the corresponding eigenvalue $\sigma(k)$ is real and changes sign from negative to positive.
	
	Since the hypotheses of Proposition~\ref{propkinstabmono} hold, it follows that $g(k) > 0$ for all $k \in \mathbb{R}^+$. Hence, for the dispersion relation~\eqref{dispersionrelation} to admit a positive root for some $k \in \mathbb{R}^+$, it is necessary that $h(k) < 0$, which also ensures that the corresponding eigenvalue $\sigma(k)$ is real. From the expression of $h(k)$ given in~\eqref{h(k)}, we derive the following necessary condition for instability:
	\begin{equation}
		\label{neccondGamma}
		k^{\alpha_1}-\Gamma\varepsilon_H<0.
	\end{equation}
	To simplify the analysis, we introduce the change of variables:
	\begin{equation}
		\label{k_gamma1}
		z=k^{\alpha_2}\quad \Rightarrow \quad z^\alpha=k^{\alpha_1}.
	\end{equation}
	so that the function $h(k)$ becomes:
	\begin{equation}\label{h(z)}
		h(z)=(z^\alpha-\Gamma\varepsilon_H)(\Gamma \varepsilon+dz)+\Gamma^2\varepsilon\gamma(1-\beta u^*)
	\end{equation}
	where we now consider $z\in \mathbb{R}^+$.	
	
	We impose the marginal instability condition by requiring the minimum of $h(z)$ over $z>0$ be zero, which is equivalent to finding a critical point $z^* \in \mathbb{R}^+$ satisfying the following conditions:
	\begin{subequations}\label{marg}
		\begin{align}
			\begin{array}{l}
				h(z^*) = 0, 
			\end{array}
			& \label{marg:a} \\[1ex]
			\begin{array}{l}
				h^{'}(z^*) = 0, 
			\end{array}
			&  \label{marg:b} \\[1ex]
			\begin{array}{l}
				h^{''}(z^*) > 0,
			\end{array}
			& \label{marg:c}
		\end{align}
	\end{subequations}
	where $'=\frac{d}{dz}$, and: 
	\begin{align}\label{h_z}
		h'&=(\alpha+1)dz^\alpha+\Gamma\alpha\varepsilon z^{\alpha-1}-\Gamma d \varepsilon_H,\\
		\label{h_zz}
		h^{''}&=\alpha z^{\alpha-2}\left((\alpha+1)dz+\Gamma(\alpha-1)\varepsilon\right).
	\end{align}
	Assume such a point $z^*>0$ exists. Then, the corresponding critical value of the bifurcation parameter $d$ can be computed from~\eqref{marg:a}: 
	\begin{equation}\label{dbar}
		d=d_c=\frac{-\Gamma\varepsilon\left(z^{*\alpha}+\Gamma(-\varepsilon_H+\gamma(1-\beta u^*))\right)}{z^*(z^{*\alpha}-\Gamma\varepsilon_H)}.
	\end{equation} 
	Evaluating the derivative $h'(z)$ in $z=z^*$
	and substituting the expression for $d$ from~\eqref{dbar} into~\eqref{h_z}, we obtain:
	\begin{equation}\label{h_htilde}
		h^{'}(z^*)=\frac{\Gamma\varepsilon}{z^*(\Gamma\varepsilon_H-z^{*\alpha})}\tilde{h}(z^{*\alpha}),
	\end{equation}
	where:
	\begin{equation}\label{htilde}
		\begin{split}
			\tilde{h}(z^{*\alpha}) &= (z^{*\alpha})^2+\Gamma\big(\gamma(1+\alpha)(1-\beta u^*)-2\varepsilon_H \big) z^{*\alpha}\\&-\Gamma^2\varepsilon_H\big(\gamma(1-\beta u^*)-\varepsilon_H \big).
		\end{split}
	\end{equation}
	The constant term in $\tilde{h}(z^{*\alpha})$ in \eqref{htilde} is negative by assumption (as ensured by Proposition~\ref{propkinstabmono}). Therefore, according to Descartes' rule of signs, 
	$\tilde{h}(z^{*\alpha})$ has exactly one real positive root, regardless of the sign of the linear coefficient. This positive root is given by: 
	\begin{equation}\label{z_piu}
		z^{*\alpha}=\frac{\Gamma}{2}(2\varepsilon_H-\gamma(1+\alpha)(1-\beta u^*)+\sqrt{\delta}),
	\end{equation}
	where $\delta$ is defined in~\eqref{delta}.
	
	From~\eqref{h_htilde}, it follows that
	\begin{equation}\label{z_star}
		z^*=\left(\frac{\Gamma}{2}(2\varepsilon_H-\gamma(1+\alpha)(1-\beta u^*)+\sqrt{\delta})\right)^{\frac{1}{\alpha}},
	\end{equation}
	satisfies $h'(z^{*})=0$, thus verifying condition~\eqref{marg:b}. Substituting \eqref{z_star} into~\eqref{dbar} yields the critical value $d=d_c$, as given in \eqref{dc} for which condition~\eqref{marg:a} is also satisfied. It remains to verify that the value $z=z^*$
	in \eqref{z_star} also satisfies condition \eqref{marg:c}. To this end, substituting $d=d_c$, as given in \eqref{dc}, and $z=z^*$, as given in \eqref{z_star}, into \eqref{marg:c}, we obtain:
	\begin{equation}\label{der_sec} 
		h^{''}(z^*)=\frac{2\alpha\varepsilon\Gamma\sqrt{\delta}z^{*(\alpha-2)}}{\gamma(1+\alpha)(1-\beta u^*) - \sqrt{\delta}}, 
	\end{equation} 
	and it is straightforward to verify that the expression in \eqref{der_sec} is always positive. We can thus conclude that $z = z^*$ is the desired critical value that satisfies conditions \eqref{marg}. By reverting the change of variables~\eqref{k_gamma1}, we obtain the corresponding critical wavenumber $k = k_c$, as given in~\eqref{kc}.
	
	Finally, it is easy to verify that the expression for $d = d_c$ given in \eqref{dc} is positive. Indeed, using \eqref{kc} we can rewrite $d_c$ as follows:
	\begin{equation}\label{dc_n}
		d_c=\frac{\varepsilon\Gamma\left((1-\beta u^*)\gamma(1-\alpha)+\sqrt{\delta}\right)}{k_c^{\alpha_2}\left(\gamma(1+\alpha)(1-\beta u^*)-\sqrt{\delta}\right)}, 
	\end{equation}
	The denominator is always positive, as it is the product of 
	$k_c$, which is positive, and a term that can be readily shown to be positive. Similarly, the numerator can be shown to be positive under the assumptions of Proposition \ref{propkinstabmono}.
\end{proof}

\begin{remark}\label{rem1}
	Observe that the bifurcation threshold $d_c$ in \eqref{dc} does not depend on the individual diffusion powers $\alpha_1$ and $\alpha_2$, but only on their ratio $\alpha$ and on the kinetic parameters.
\end{remark}

This result highlights that the Turing threshold is governed by the relative scaling of the two transport processes, rather than by their absolute superdiffusive exponents.

\begin{remark}[Dependence of $k_c$ on $\Gamma$]\label{rem2}
	Observe that the necessary condition \eqref{neccondGamma} is always satisfied under the hypotheses of Proposition~\ref{propkinstabmono} and condition~\eqref{us_on_the_left}.
	
	Notice that the condition \eqref{neccondGamma} provides an upper bound for the critical wavenumber, which depends on $\Gamma$.

	Theorem \ref{propturininstabmono} gives the necessary conditions for the system \eqref{origsyst_1}-\eqref{origsyst_2} to admit
	a finite $k$ pattern-forming Turing instability in the case $d>d_c$, but it does not guarantee the emergence of spatial patterns.
	The pattern in fact will emerge only if 
	there is at least one integer $n$ so that $k_1<n<k_2<(\Gamma \varepsilon_H)^{1/\alpha_1}$,
	where 
	$k_1$ and $k_2$ are the roots of 
	$h(k)$ in \eqref{h(k)} and  $n \, (n\in \mathbb{N})$
	are the modes allowed by the periodic
	boundary conditions on the interval
	$[0,2\pi] \subseteq \mathbb{R}$.
	Therefore, to observe pattern formation under periodic boundary conditions, $\Gamma$ must be chosen sufficiently large.			\end{remark}

This reflects the fact that larger spatial domains (i.e. larger values of $\Gamma$) allow for a richer spectrum of unstable modes and therefore favor the emergence of spatial patterns.

\section{Effects of anomalous diffusion on pattern formation}

\subsection{Case $\alpha_1=\alpha_2$}

We first consider the case in which the two species superdiffuse with the same fractional order, i.e.\ $\alpha_1=\alpha_2$, or equivalently $\alpha=1$. Under this condition, both the activator and the inhibitor are characterized by the same transport scaling, and anomalous diffusion affects only the spatial redistribution of each species, without introducing any relative asymmetry between them. Recall that smaller values of $\alpha_1=\alpha_2$ correspond to stronger superdiffusion, namely to a higher probability of long jumps in the underlying transport process.

\paragraph{Bifurcation threshold.}
When $\alpha_1=\alpha_2$, the expression of the Turing threshold $d_c$ in \eqref{dc} reduces to:
\begin{equation}
	\label{dc_alpha=1}
	d_c=\frac{\varepsilon }{\left[\sqrt{\gamma(1-\beta u^*)} - \sqrt{\gamma(1-\beta u^*) - \varepsilon_H} \right]^2},
\end{equation}
which coincides with the classical threshold obtained in the presence of standard diffusion. In particular, $d_c$ does not depend on the width parameter $\Gamma$ nor on the common diffusion exponent $\alpha_1=\alpha_2$. Under the hypotheses of Proposition~\ref{propkinstabmono} and condition~\eqref{us_on_the_left}, it is straightforward to verify from the expression of $d_c$ in~\eqref{dc_alpha=1} that $d_c>1$. Moreover, since in this case the scaling \eqref{parameter_adim} reduces to $d=D_V/D_U$, instability can only arise through the classical Turing mechanism, namely when the inhibitor diffuses faster than the activator. Therefore, when $\alpha_1=\alpha_2$, anomalous diffusion does not modify the mechanism responsible for the onset of patterns, but only affects their spatial scales.

\paragraph{Critical wavenumber and wavelength selection.}
In the case $\alpha_1=\alpha_2$, the expression of the critical wavenumber given in \eqref{kc} becomes:
\begin{equation}
	\label{kc_alpha=1}
	k_c=\left[\Gamma\left(\varepsilon_H-\gamma(1-\beta u^*)+\frac{\sqrt{\delta}}{2}\right)\right]^{1/\alpha_1},
\end{equation}
with
\begin{equation}
	\label{delta_alpha=1}
	\delta = 4\gamma(1-\beta u^*)[\gamma(1-\beta u^*)-\varepsilon_H].
\end{equation}
Thus the characteristic spatial scale of the emerging pattern depends both on the domain parameter $\Gamma$ and on the superdiffusion exponent.

Since we assume periodic boundary conditions on $\Omega=[0,2\pi]$, pattern formation requires $k_c\geq 1$, which implies:
\begin{equation}
	\label{cond_kc>1}
	\Gamma\geq\frac{1}{\varepsilon_H-\gamma(1-\beta u^*)+\frac{\sqrt{\delta}}{2}},
\end{equation}
where the denominator is positive under the hypotheses of Proposition~\ref{propkinstabmono}, as can be readily verified. Therefore, pattern formation is possible only if $\Gamma$ exceeds the threshold given in~\eqref{cond_kc>1}.
\begin{figure*}
	\centering
	\subfigure[]{\includegraphics[width=0.4\textwidth]{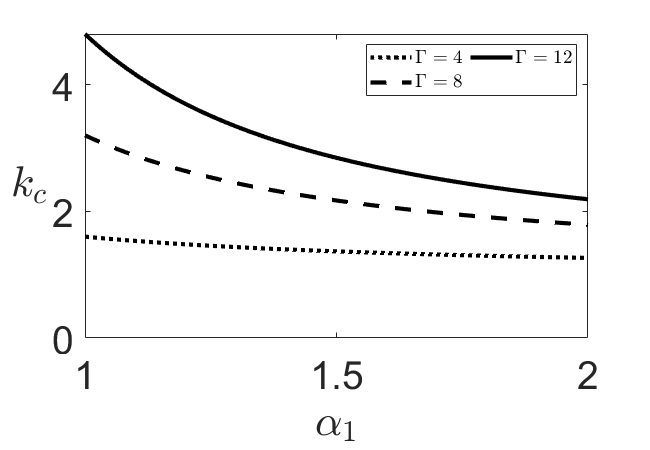}
		\label{k_vs_gamma1=gamma2}}
	\subfigure[]{\includegraphics[width=0.4\textwidth]{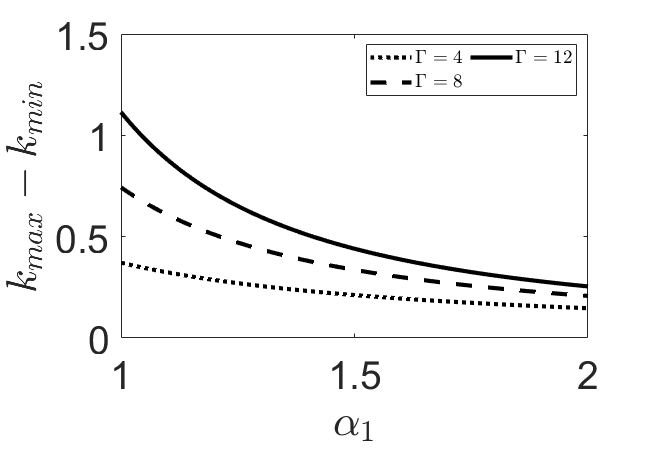}
		\label{dist_kmaxmin_vs_gamma1=gamma2}}
	\caption{(a) Critical wavenumber $k_c$ as a function of $\alpha_1=\alpha_2$, for different values of $\Gamma$. (b) Length of the instability interval $[k_{\max}-k_{\min}]$, representing the range of unstable modes, as a function of $\alpha_1=\alpha_2$ for different values of $\Gamma$. The other system parameters are fixed at $a=-0.1$, $\beta=0.3$, $\gamma=2$, $\varepsilon=1$, and $d_c=6.9525$, while $d$ is set to $d=7.0220>d_c$. }
	\label{kcr_vs_gamma1=gamma2}
\end{figure*}


\paragraph{Effect of superdiffusion on scale selection.}
Figures~\ref{kcr_vs_gamma1=gamma2} illustrates how the critical wavenumber $k_c$ and the band of unstable modes depend on $\alpha_1=\alpha_2$ and on $\Gamma$. For fixed $\Gamma$ large enough to satisfy~\eqref{cond_kc>1}, decreasing $\alpha_1=\alpha_2$ (i.e.\ strengthening superdiffusion) leads to an increase of $k_c$, and therefore to patterns with shorter characteristic wavelengths and higher spatial segregation. At the same time, the distance between the two roots of $h(k)$, whose expression is given in~\eqref{h(k)}, increases, implying a wider band of unstable modes and thus a higher potential complexity of the resulting pattern. This effect becomes more pronounced as $\Gamma$ increases above the threshold~\eqref{cond_kc>1}, see Fig.\ref{dist_kmaxmin_vs_gamma1=gamma2} and also  Fig.~\ref{dispersion_gamma1=gamma2}, where the dispersion curve and the corresponding instability region progressively shift towards larger wavenumbers, leading to an increase in both $k_c$ and the number of unstable modes. Since $\Gamma$ scales as $\Gamma\sim \ell^{\alpha_1}$ (see \eqref{parameter_adim}), this behavior reflects the fact that larger spatial domains allow for finer spatial structures, without affecting the instability threshold $d_c$.

\paragraph{Interpretation of the results.}
When both species superdiffuse with the same exponent, anomalous diffusion does not alter the classical activator--inhibitor mechanism responsible for pattern formation: instability still requires the inhibitor to diffuse faster than the activator. However, superdiffusion plays a crucial role in the selection of the characteristic spatial scale of the pattern. Stronger superdiffusion (smaller $\alpha_1=\alpha_2$) reduces the regularizing effect of the transport operator, leading to larger critical wavenumbers and broader bands of unstable modes. As a consequence, the emerging patterns are more strongly segregated and display a richer spatial structure.
Interpreting the system, for instance, as a prey--predator model, this corresponds to a situation in which both species perform Lévy-like movements with comparable statistics. Although the qualitative mechanism generating the patterns remains unchanged, increasing the strength of superdiffusion enhances spatial fragmentation and leads to more complex coexistence structures.

\subsection{Case $\alpha_1\neq \alpha_2$}

We now turn to the anomalous case in which the two species are characterized by different superdiffusive exponents, i.e.\ $\alpha_1 \neq \alpha_2$, or equivalently $\alpha \neq 1$. In this regime, transport does not simply rescale spatial patterns, but introduces an intrinsic asymmetry between the activator and the inhibitor, which may modify the classical Turing mechanism.

In this case, the Turing threshold $d_c$ given in~\eqref{dc} depends not only on the kinetic parameters, but also on the domain parameter $\Gamma$ and on the ratio $\alpha=\alpha_1/\alpha_2$. As observed in Remark~\ref{rem1}, although the two species are characterized by distinct fractional orders, the instability threshold depends only on their ratio and not on their individual values. This property allows for a systematic classification of the pattern-forming regimes in terms of the parameter $\alpha$.

\paragraph{Breaking the classical activator--inhibitor paradigm.}

For standard diffusion, i.e.\ $\alpha_1=\alpha_2=2$, the classical Turing mechanism requires the inhibitor to diffuse faster than the activator, namely $D_V>D_U$, which corresponds to $d>1$ in the dimensionless formulation~\eqref{parameter_adim}. This condition embodies the principle of short-range activation and long-range inhibition underlying pattern formation.

In the presence of anomalous diffusion, however, the effective transport of each species is determined by two independent factors: the diffusion exponent, characterized by $\alpha_1$ and $\alpha_2$, and the diffusion rate, described by the dimensionless parameter $d$, which depends on the ratio $D_V/D_U$ through the scaling~\eqref{parameter_adim}. As a consequence, the notion of ``faster'' or ``slower'' diffusion becomes more subtle.

Using the scaling~\eqref{parameter_adim}, the classical condition $D_V<D_U$ is equivalent to:
\begin{equation}
	\label{cond_DV<DU}
	d < \ell^{\alpha_1 - \alpha_2},
\end{equation}
so that pattern formation may occur even when the activator diffuses faster than the inhibitor in the classical sense, provided that the difference in the superdiffusive exponents compensates for this effect.

\begin{remark}
	Notice that the condition $d_c<\ell^{\alpha_1-\alpha_2}$ is possible if and only if 
	the following condition among parameters holds:

		\begin{equation}\label{cond_dc_smaller1}
			\Gamma^{\frac{1-\alpha}{\alpha}}>\ell^{\alpha_1 \frac{1-\alpha}{\alpha}}
			\frac{\varepsilon\left[(1-\beta u^*)\gamma(1-\alpha)+\sqrt{\delta}\right]}{\left[\varepsilon_H -\frac{\gamma}{2}(1+\alpha)(1-\beta u^*)+\frac{\sqrt{\delta}}{2}\right]^{\frac{1}{\alpha}} \left[\gamma(1+\alpha)(1-\beta u^*)-\sqrt{\delta}\right]},
		\end{equation}

	with $\delta$ given in~\eqref{delta}, and $\ell$ denoting the length of the one-dimensional spatial domain.
	
	If $\alpha>1$, the function $\Gamma^{(1-\alpha)/\alpha}$ is decreasing in $\Gamma$, and sufficiently small values of $\Gamma$ are required. Conversely, if $\alpha<1$, the same function is increasing, and sufficiently large values of $\Gamma$ are needed.
\end{remark}

\begin{figure}
	\centering
	\includegraphics[width=0.4\textwidth]{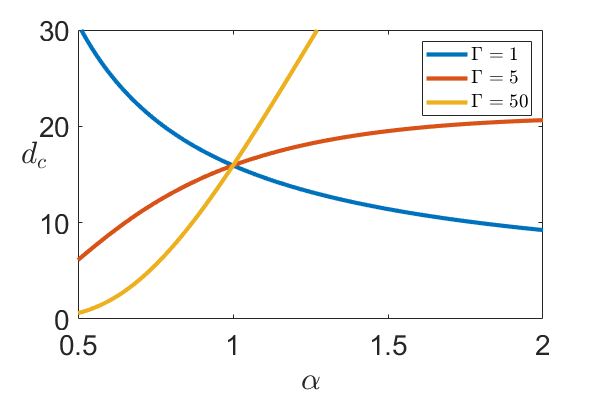}
	\caption{Dependence of the bifurcation threshold $d_c$ on $\alpha$ for different values of $\Gamma$.  The other parameters are $a=-0.1$, $\beta=0.3$, $\gamma=2.3$, and $\varepsilon=2$.}
	\label{dc_vs_alpha_multipleGamma}
\end{figure}

Figure~\ref{dc_vs_alpha_multipleGamma} illustrates the dependence of the bifurcation threshold $d_c$ on the ratio $\alpha$ for different values of $\Gamma$, with all the other kinetic parameters fixed. For sufficiently small values of $\Gamma$, $d_c(\alpha)$ is a decreasing function, whereas increasing $\Gamma$ leads to a progressively steeper increasing behavior.

This behavior can be understood from~\eqref{dc}, which shows that $d_c\propto \Gamma^{(\alpha-1)/\alpha}$. Since instability requires $d>d_c$, pattern formation is favored when $d_c$ is small. For $\alpha>1$, the exponent $(\alpha-1)/\alpha$ is positive and $d_c$ increases with $\Gamma$, making instability more difficult for large domains. Conversely, for $\alpha<1$, the exponent is negative and $d_c$ decreases with $\Gamma$, thereby facilitating pattern formation.

All curves intersect at $\alpha=1$, corresponding to the standard diffusion case $\alpha_1=\alpha_2=2$, for which $d_c$ depends only on the kinetic parameters through~\eqref{dc_alpha=1}. This point marks the transition between the classical and anomalous regimes.

These results highlight the interplay between diffusion rates, superdiffusive exponents, and reaction kinetics in determining the onset of instability.

\paragraph{Case $\alpha_1>\alpha_2$}\label{gamma1magg}

We first consider the case $\alpha_1>\alpha_2$, i.e.\ $\alpha>1$, corresponding to a regime in which the inhibitor is more strongly superdiffusive than the activator, since smaller diffusion exponents are associated with stronger anomalous effects.

In a prey--predator interpretation, this corresponds to a predator performing Lévy-like movements with a higher probability of long jumps, while the prey exhibits a more Brownian-like dynamics.

\begin{figure*}
	\subfigure[]{\includegraphics[width=0.32\textwidth]{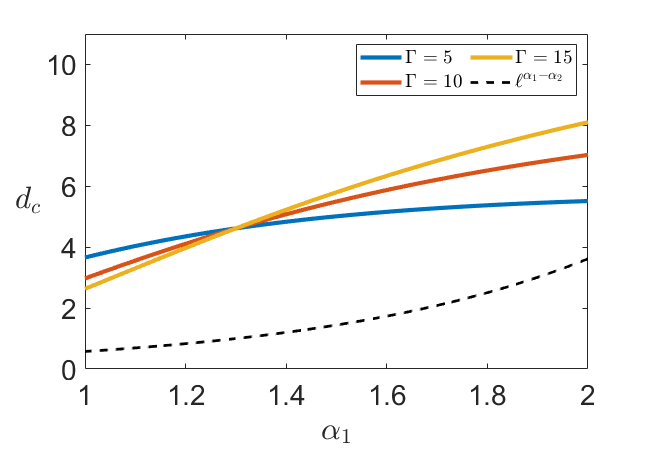}
		\label{dc_vs_alpha1_multipleGamma_alpha1bigger_alpha2=1_3}}
	\subfigure[]{\includegraphics[width=0.32\textwidth]{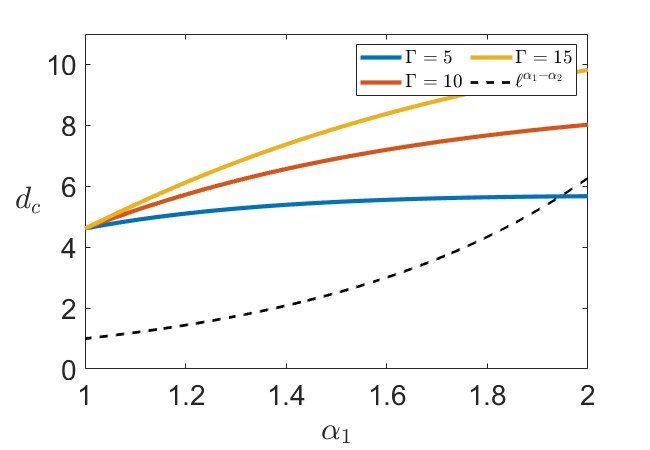}
		\label{dc_vs_alpha1_multipleGamma_alpha1bigger_alpha2=1_001}}
	\subfigure[]{\includegraphics[width=0.32\textwidth]{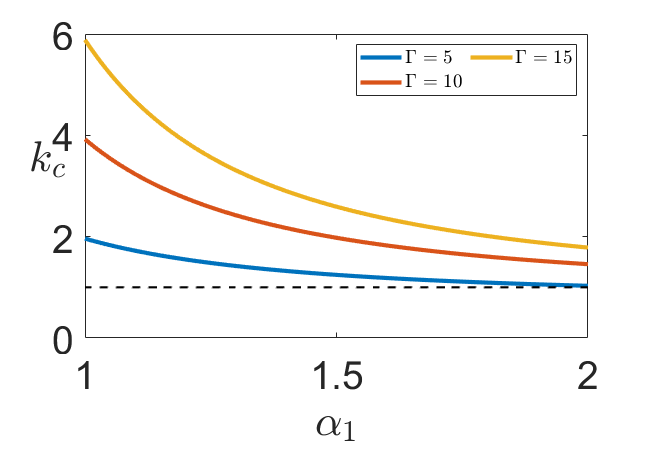}
		\label{kc_vs_alpha1_multipleGamma_alpha1bigger}}
	\caption{Dependence of the bifurcation threshold $d_c$ on $\alpha_1$ for $\alpha_2=1.3$ (a) and $\alpha_2=1.001$ (b), and of the critical wavenumber $k_c$ on $\alpha_1$ for different values of the width parameter $\Gamma$ with $\alpha_2=1.001$ (c). The dashed line represents the curve $d=\ell^{\alpha_1-\alpha_2}$, corresponding to the condition $D_V/D_U=1$ in the original variables.
		The other parameters are fixed as $a=0.1$, $\beta=0.4$, $\gamma=1.8$, and $\varepsilon=1.01$.
	}
	\label{dc_kc_vs_alpha1_multipleGamma_alpha1bigger}
\end{figure*}

Figures~\ref{dc_vs_alpha1_multipleGamma_alpha1bigger_alpha2=1_3} and~\ref{dc_vs_alpha1_multipleGamma_alpha1bigger_alpha2=1_001} show the dependence of the bifurcation threshold $d_c$ on the activator diffusion exponent $\alpha_1$, for different values of $\Gamma$ and fixed values of $\alpha_2$. In all considered cases, $d_c$ is an increasing function of $\alpha_1$, indicating that smaller values of $\alpha_1$, corresponding to stronger superdiffusion of the activator, favor pattern formation by lowering the instability threshold.

Moreover, for fixed $\alpha_1$, increasing $\Gamma$ leads to larger values of $d_c$, making instability more difficult to achieve. Conversely, smaller values of $\Gamma$ systematically reduce the threshold and promote pattern formation.

The dashed line marks the parameter region in which the classical condition $D_V>D_U$ is violated, according to~\eqref{cond_DV<DU}. In this regime, instability occurs even when the activator diffuses faster than the inhibitor. This is made possible by the scaling
$d=\frac{D_V}{D_U}\ell^{\alpha_1-\alpha_2}$, which allows the difference in diffusion exponents to compensate for the ratio of diffusion rates.

In particular, for sufficiently small values of $\alpha_2$, $d_c$ may become low enough to allow $D_U>D_V$, as shown in Fig.~\ref{dc_vs_alpha1_multipleGamma_alpha1bigger_alpha2=1_001} for $\Gamma=5$.

This behavior reflects a compensation mechanism between diffusion rate and diffusion exponent: although the activator diffuses faster in terms of rate, the stronger superdiffusion of the inhibitor balances this effect, allowing pattern formation beyond the classical regime.

The corresponding critical wavenumber must also be examined to assess whether spatial patterns can develop on bounded domains. As shown in Fig.~\ref{kc_vs_alpha1_multipleGamma_alpha1bigger}, in the case $\alpha_2=1.001$ and $\Gamma=5$, $k_c$ remains larger than unity, allowing for pattern formation on $\Omega=[0,2\pi]$ with periodic boundary conditions, provided that the instability band is sufficiently wide.

\begin{figure*}
	\subfigure[]{\includegraphics[width=0.3\textwidth]{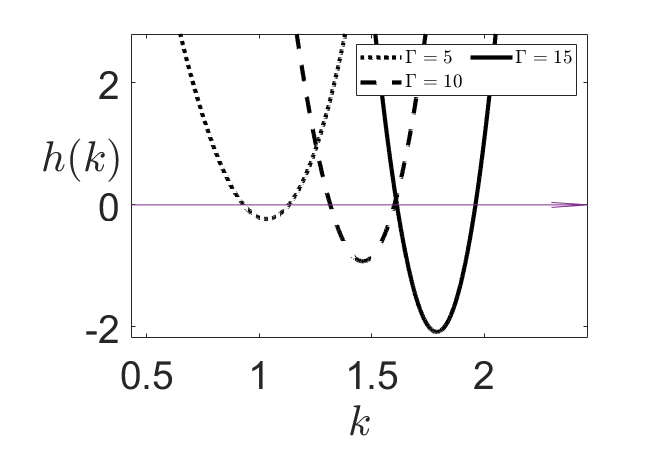}
		\label{disp_alpha1=2_alpha1magg}}
	\subfigure[]{\includegraphics[width=0.3\textwidth]{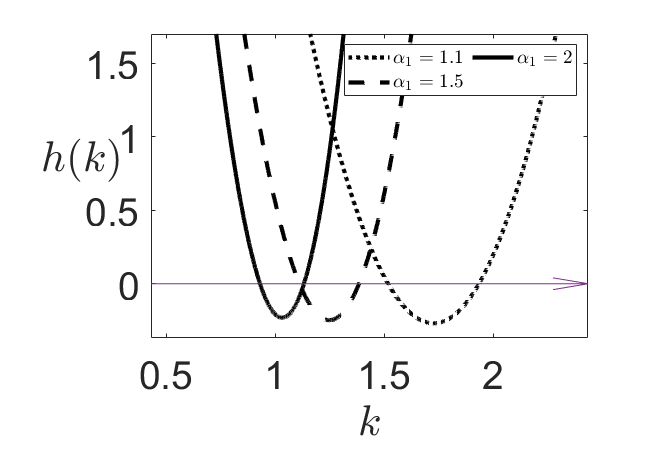}
		\label{disp_Gamma=5_alpha1magg}}
	\subfigure[]{\includegraphics[width=0.36\textwidth]{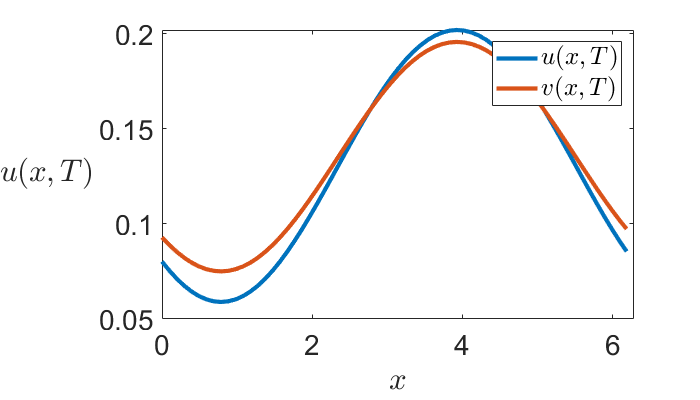}
		\label{Gamma=5_1D}}
	\caption{Dispersion curves for $\alpha_2=1.001$: (a) varying $\Gamma$ with $\alpha_1=2$, (b) varying $\alpha_1>\alpha_2$ with $\Gamma=5$, and (c) corresponding one-dimensional pattern for $\alpha_1=2$, $\alpha_2=1.001$, and $\Gamma=5$. The remaining parameters are fixed at $a=0.1$, $\beta=0.4$, $\gamma=1.8$, and $\varepsilon=1.01$, and $d=d_c(1+0.1^2)$ in all panels.
	}
	\label{disp_alpha1magg}
\end{figure*}

Figure~\ref{disp_alpha1magg} illustrates the corresponding dispersion curves. For fixed $\alpha_1=2$, increasing $\Gamma$ shifts the instability region towards larger wavenumbers and widens the band of unstable modes (Fig.~\ref{disp_alpha1=2_alpha1magg}), leading to more segregated and complex spatial patterns.

Conversely, for fixed $\Gamma=5$, increasing $\alpha_1$ shifts the dispersion curve towards smaller wavenumbers and reduces the instability band (Fig.~\ref{disp_Gamma=5_alpha1magg}), resulting in less segregated and smoother spatial structures.

These results confirm that, in this regime, both the domain parameter $\Gamma$ and the interplay between diffusion rates and superdiffusive exponents play a central role in determining the instability threshold and the characteristic spatial scale of the emerging patterns.
\paragraph{Case $\alpha_1<\alpha_2$}

We now consider the case $\alpha_1<\alpha_2$, i.e.\ $\alpha<1$, corresponding to a regime in which the activator is more strongly superdiffusive than the inhibitor.

\begin{figure*}
	\subfigure[]{\includegraphics[width=0.32\textwidth]{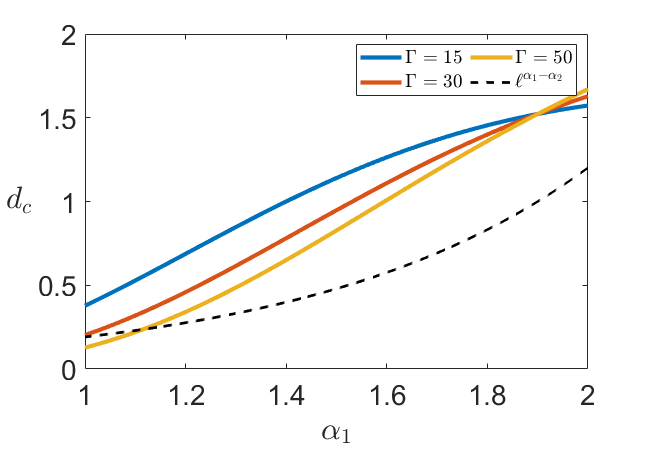}
		\label{dc_vs_alpha1_multipleGamma_alpha1smaller_alpha2=1_9}}
	\subfigure[]{\includegraphics[width=0.32\textwidth]{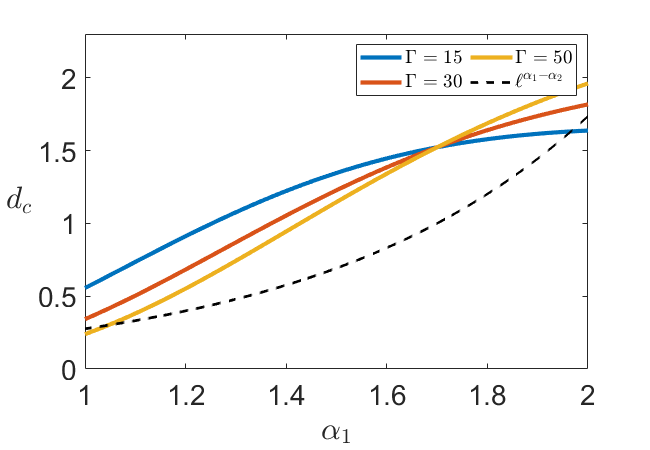}
		\label{dc_vs_alpha1_multipleGamma_alpha1smaller_alpha2=1_7}}
	\subfigure[]{\includegraphics[width=0.32\textwidth]{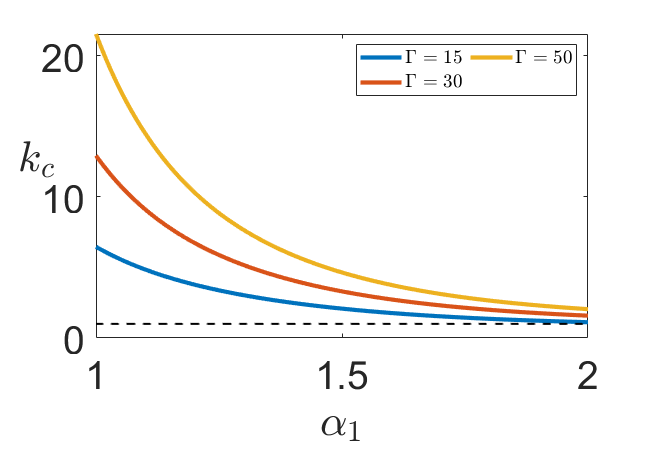}
		\label{kc_vs_alpha1_multipleGamma_alpha1smaller}}
	\caption{Dependence of the bifurcation threshold $d_c$ on $\alpha_1$ for $\alpha_2=1.9$ (a) and $\alpha_2=1.7$ (b), and of the critical wavenumber $k_c$ on $\alpha_1$ for different values of the width parameter $\Gamma$ with $\alpha_2=1.7$ (c). The dashed line represents the curve $d=\ell^{\alpha_1-\alpha_2}$, corresponding to the condition $D_V/D_U=1$. The remaining parameters are fixed at $a=0.001$, $\beta=0.1$, $\gamma=1.03$, and $\varepsilon=1.1$.
	}
	\label{dc_kc_vs_alpha1_multipleGamma_alpha1smaller}
\end{figure*}

Figures~\ref{dc_vs_alpha1_multipleGamma_alpha1smaller_alpha2=1_9} and~\ref{dc_vs_alpha1_multipleGamma_alpha1smaller_alpha2=1_7} show the dependence of the bifurcation threshold $d_c$ on $\alpha_1$ for different values of $\Gamma$ and fixed values of $\alpha_2$. For sufficiently large values of $\Gamma$, $d_c$ is an increasing function of $\alpha_1$, indicating that stronger superdiffusion of the activator (smaller $\alpha_1$) promotes instability.

Moreover, increasing $\Gamma$ systematically lowers the instability threshold when $\alpha_1\ll\alpha_2$, showing that large domains strongly favor pattern formation in this regime.

The dashed line marks the parameter region in which the classical condition $D_V>D_U$ is violated, according to~\eqref{cond_DV<DU}. In this case, pattern formation occurs even when the activator diffuses faster than the inhibitor. For instance, Fig.~\ref{dc_vs_alpha1_multipleGamma_alpha1smaller_alpha2=1_7} shows that, for $\Gamma=50$ and sufficiently small $\alpha_1$, $d_c$ becomes low enough to allow $D_U>D_V$.

In contrast with the case $\alpha_1>\alpha_2$, here the violation of the classical condition is not due to a compensation between diffusion rate and diffusion exponent, but to a combined effect: the activator diffuses faster both in terms of rate and superdiffusive exponent. This leads to a strongly non-classical instability mechanism.

\begin{figure*}
	\subfigure[]{\includegraphics[width=0.3\textwidth]{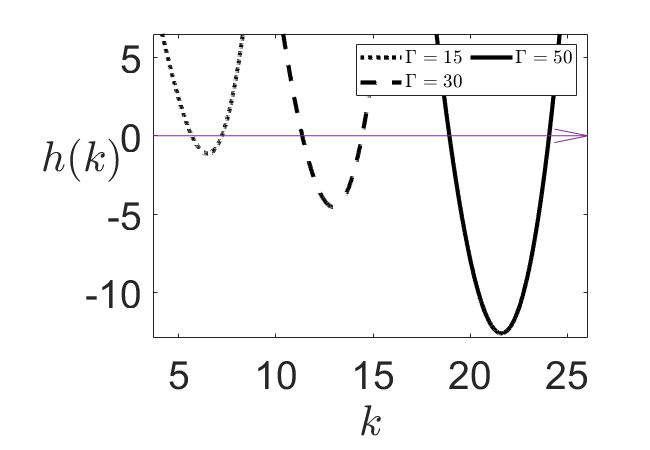}
		\label{disp_alpha2=1_7_alpha1min}}
	\subfigure[]{\includegraphics[width=0.3\textwidth]{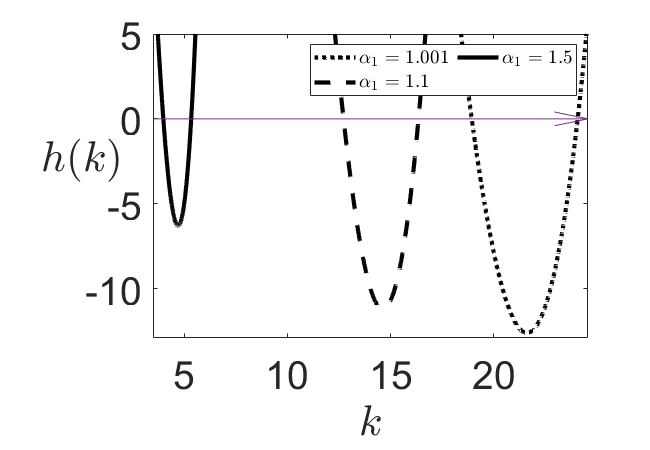}
		\label{disp_Gamma=50_alpha1min}}
	\subfigure[]{\includegraphics[width=0.36\textwidth]{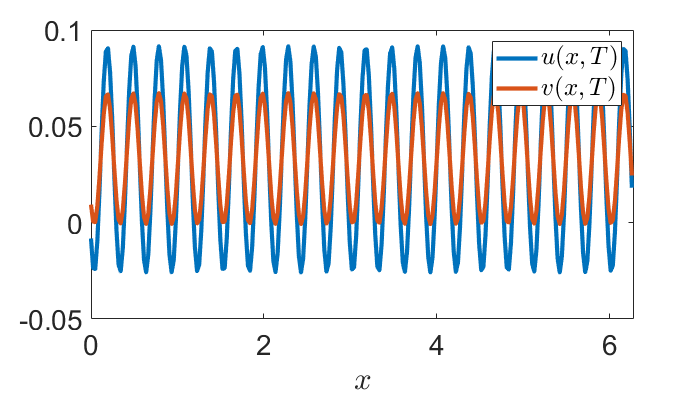}
		\label{Gamma=50_1D}}
	\caption{Dispersion curves for $\alpha_2=1.7$: (a) varying $\Gamma$ with $\alpha_1=1.001$, (b) varying $\alpha_1<\alpha_2$ with $\Gamma=50$, and (c) corresponding one-dimensional pattern for $\alpha_1=1.001$, $\alpha_2=1.7$, and $\Gamma=50$. The remaining parameters are fixed at $a=0.001$, $\beta=0.1$, $\gamma=1.03$, and $\varepsilon=1.1$, and $d=d_c(1+0.1^2)$ in all panels.
	}
	\label{disp_alpha1min}
\end{figure*}

Figure~\ref{disp_alpha1min} illustrates the corresponding dispersion curves. For fixed $\alpha_1=1.001$, increasing $\Gamma$ shifts the instability region towards larger wavenumbers and enlarges the band of unstable modes (Fig.~\ref{disp_alpha2=1_7_alpha1min}), leading to highly segregated patterns.

Conversely, for fixed $\Gamma=50$, increasing $\alpha_1$ shifts the dispersion curve towards smaller wavenumbers and reduces the instability band (Fig.~\ref{disp_Gamma=50_alpha1min}), resulting in smoother spatial structures.

As in the case $\alpha_1>\alpha_2$, the characteristic wavelength is mainly controlled by the activator diffusion exponent and by the domain parameter $\Gamma$, while the ratio $\alpha$ primarily affects the instability threshold.

In this regime, the scaling $d=\frac{D_V}{D_U}\ell^{\alpha_1-\alpha_2}$ implies that, since $\alpha_1-\alpha_2<0$, large values of $\Gamma$ and strong superdiffusion of the activator are required to compensate for the diffusion rate ratio. This explains why pattern formation with $D_V/D_U<1$ is possible only for sufficiently large domains.

From a biological viewpoint, this regime corresponds to a situation in which the prey diffuses much faster and performs frequent long jumps, while the predator moves more slowly and follows an almost Brownian dynamics. The reduced regularizing effect of strong superdiffusion leads to the formation of highly fragmented and clustered spatial structures.

\section{Weakly Nonlinear Analysis}
\label{WNL}

In this Section we perform a weakly nonlinear (WNL) analysis in a neighborhood of the Turing bifurcation threshold to describe the asymptotic behaviour of the emerging patterns. We derive the Stuart--Landau equation governing the slow temporal evolution of the dominant mode amplitude, classify the bifurcation as supercritical or subcritical, and investigate how the nonlinear regime depends on the anomalous diffusion exponents. Numerical simulations are also presented to illustrate the influence of the diffusion power on pattern morphology.


We employ a perturbative analysis based on the method of multiple scales. Let $\eta$ be a small control parameter measuring the dimensionless distance from the Turing bifurcation threshold $d_c$. Close to the Turing bifurcation, several time scales
$
T_j = \eta^j t, \ j=1,2,\dots
$
must be introduced, so that time derivatives decompose as:
\begin{equation}
	\label{time}
	\frac{\partial}{\partial t} = \eta \frac{\partial}{\partial T_1} + \eta^2 \frac{\partial}{\partial T_2} + \eta^3 \frac{\partial}{\partial T_3} + \dots
\end{equation}
Both the solution of the system \eqref{origsyst_1}--\eqref{origsyst_2} and the bifurcation parameter $d$ are expanded as:

\begin{eqnarray}
	\mathbf{w}&=&\eta\mathbf{w}_1+\eta^2\mathbf{w}_2+\eta^3\mathbf{w}_3+\mathcal{O}(\eta^4),
	\label{ex2}\\
	d&=&d_c+\eta d_1+\eta^2 d_2+\eta^3 d_3+\mathcal{O}(\eta^4).
	\label{exd}
\end{eqnarray}

%
%
Substituting \eqref{ex2}, \eqref{exd},  and \eqref{time} into \eqref{origsyst_1}--\eqref{origsyst_2} and expanding in powers of $\eta$, we obtain the following hierarchy of problems at each order in $\eta$:
\begin{subequations}
	\begin{align}
		O(\eta):& \qquad \mathscr{L}^{d_c}\mathbf{w}_1=\mathbf{0},\label{wlin1}\\
		O(\eta^2):& \qquad \mathscr{L}^{d_c}\mathbf{w}_2=\mathbf{F},\label{wlin2}\\
		O(\eta^3):& \qquad \mathscr{L}^{d_c}\mathbf{w}_3=\mathbf{G},\label{wlin3}
	\end{align}
\end{subequations}
where we have defined the linear operator $\mathscr{L}$ as follows:
\begin{equation}
	\label{linop}
	\mathscr{L}= \Gamma J+\mathscr{D},\ {\rm with\ }\mathscr{D}=\left(\begin{array}{ll}\nabla^{\alpha_1}&0\\0&d\nabla^{\alpha_2}\end{array}\right)
\end{equation}
and $\mathscr{L}^{d_c}$ denotes the operator $\mathscr{L}$ evaluated at $d=d_c$. The source terms $\mathbf{F}$ and $\mathbf{G}$ are:

	\begin{eqnarray}
		\mathbf{F}&=  &\frac{\partial\mathbf{w}_1 }{\partial T_1} - 
		\left(\begin{array}{c}
			0\\ d_1 \nabla^{\alpha_2} v_1 
		\end{array}\right)+\Gamma\left(\begin{array}{c}
			3u^*u_1^2-\beta u_1v_1\\ 0 
		\end{array}\right),\\\label{G}
		\mathbf{G}&=&  \frac{\partial\mathbf{w}_1 }{\partial T_2} + \frac{\partial\mathbf{w}_2 }{\partial T_1}- 
		\left(\begin{array}{c}
			0\\ d_2 \nabla^{\alpha_2} v_1
		\end{array}\right) - \left(\begin{array}{c}
			0\\ d_1 \nabla^{\alpha_2} v_2
		\end{array}\right) + \Gamma \left(\begin{array}{c}
			6u^*u_1u_2-\beta (u_1v_2+u_2v_1)+u_1^3\\ 0 
		\end{array}\right),
	\end{eqnarray}

where $\mathbf{w}_i\equiv(u_i, v_i)^\top,\, i=1,2$.
The solution to the linear problem \eqref{wlin1} satisfying the periodic
boundary conditions is given by:
\begin{equation}
	\mathbf{w}_1=A(T_i)\bm{\varrho}\hspace{.1cm} e^{ik_c x} + c.c., 
	\label{w1}
\end{equation}
%
%
%
where $A(T_i)$ is the still unknown amplitude, $D_{k_c}^{d_c} $ is the matrix in \eqref{vectsyst} evaluated at $k\equiv k_c$ and $d\equiv d_c$,  and 
$\bm{\varrho}\in {\rm Ker}\left\{\Gamma J-D_{k_c}^{d_c}\right\}$ reads:%
\begin{equation}\label{rho}
	\bfrho=z \left(\begin{matrix}
		1\\
		M \end{matrix} \right), \quad z\in \mathbb{C}, \quad M = \frac{\Gamma\gamma\varepsilon}{\varepsilon\Gamma+d_ck_c^{\alpha_2}}.
\end{equation}		
Once the solution \eqref{w1} is substituted into \eqref{wlin2} at $O(\eta^2)$, the solvability condition is satisfied by imposing $T_1=0$ and $d_1=0$. Moreover, since the fractional Laplacian operators $\nabla^{\alpha_i}$ are self-adjoint in $L^2$ \cite{Muratori2016,Petrosyan2020}, the adjoint problem can be treated in the same way as in the classical case. Therefore, the problem \eqref{wlin2} admits a solution $\ww_2$ of the following form:
\begin{equation}
	\mathbf{w}_2 = |A|^2\mathbf{w}_{20}+A^2\mathbf{w}_{22}e^{2ik_cx}+c.c.,
	\label{w2}
\end{equation}
where $\mathbf{w}_{20}$ and $\mathbf{w}_{22}$ solve the following linear systems:
\[
\begin{cases}
	J\mathbf{w}_{20} = \Gamma  \left( 3u^*- \dfrac{ \beta \gamma \varepsilon\Gamma}{\varepsilon\Gamma+d_ck_c^{\alpha_2}}, 0 \right)^{\top}, \\
	\left(J - D_{2k_c} \right)  \mathbf{w}_{22} = \Gamma  \left( 3u^*- \dfrac{ \beta \gamma \varepsilon\Gamma}{\varepsilon\Gamma+d_ck_c^{\alpha_2}}, 0 \right)^{\top}.
\end{cases}
\label{w2_systems}
\]
Substituting \eqref{w1} and \eqref{w2} into the expression of $\mathbf{G}$ in \eqref{G} yields:

	\begin{gather}
		\mathbf{G}=  \left(\frac{\partial A }{\partial T_2} \bm{\varrho} + A \mathbf{G}_1 + A|A|^2 \mathbf{G}_3\right) e^{ik_c x}  + \mathbf{G}^* + c.c.,  \\
		{\rm where}\quad	\mathbf{G}_1 = \left(\begin{array}{c}
			0\\d_2k_c^{\alpha_2} M
		\end{array}\right),  
		\quad	\mathbf{G}_3 = \Gamma \left(\begin{array}{c}
			(2 u_{20}+u_{22})(6u^*-\beta M)-\beta(2v_{20}+v_{22})+3\\0
		\end{array}\right),\nonumber
	\end{gather}

where $\mathbf{w}_{2j}\equiv (u_{2j}, v_{2j})^\top$.
Imposing the Fredholm solvability condition at $O(\eta^3)$, we obtain the following Stuart--Landau equation for the amplitude:
\begin{equation}
	\frac{\partial A}{\partial T_2}=\sigma A- LA|A|^2,
	\label{SL}
\end{equation}
with:
\begin{equation}
	\sigma=-\frac{\langle \mathbf{G}_1,\bm{\varrho'} \rangle}{\langle\bm{\rho},\bm{\rho'} \rangle}, \qquad L=\frac{\langle \mathbf{G}_3,\bm{\varrho'} \rangle}{\langle\bm{\varrho},\bm{\varrho'} \rangle}.
	\label{sigmaL}
\end{equation}
and $\bm{\varrho}'\in \mathrm{Ker}(\Gamma J^{\top}-D_{k_c}^{d_c})$ is given by:
\begin{equation}
	\bm{\varrho}'=h\left(\begin{matrix}
		1\\
		M' \end{matrix} \right), \quad h\in \mathbb{C}, \quad M'=\frac{\Gamma(\beta u^*-1)}{\Gamma\varepsilon+d_c k_c^{\alpha_2}}.
	\label{rho1}
\end{equation}
The coefficient $\sigma$ is always positive in the parameter region where Turing instability occurs. The sign of the Landau coefficient $L$ determines the nature of the bifurcation: for $L>0$ the bifurcation is supercritical, whereas for $L<0$ it is subcritical.
In the supercritical case, the amplitude equation predicts a stable saturation of the instability and the asymptotic amplitude of the emerging pattern can be explicitly determined. This provides a quantitative description of the weakly nonlinear regime close to threshold, as stated in Theorem~\ref{teo_super}.
In the subcritical case, the instability leads to the emergence of large-amplitude patterns and the cubic truncation of the amplitude equation is no longer sufficient to capture their dynamics. Higher-order nonlinear terms must therefore be taken into account. In particular, a consistent description of the pattern amplitude would require extending the weakly nonlinear analysis up to order $\eta^5$.
\begin{theorem}{(Asymptotic Solution in the supercritical regime)}\label{teo_super}
	Under the hypotheses of Theorem~\ref{propturininstabmono}, assume that the distance from the bifurcation threshold is sufficiently small so that the only unstable mode is the critical one with wavenumber $k_c$, given in \eqref{kc}, and that $k_c$ is compatible with the imposed boundary conditions. Moreover, assume that the initial conditions of system \eqref{origsyst_1}-\eqref{origsyst_2} are a sufficiently small perturbation of the equilibrium $E^*$. Then, the emerging pattern solution takes the form:
	\begin{equation}\label{sol_appr}
		\ww=\eps\bfrho |A_{\infty}|\cos\left(k_cx\right),
	\end{equation}
	where:
	\[
	|A_{\infty}|=\sqrt{\frac{\sigma}{L}}
	\]
	is the stable equilibrium amplitude of the Stuart--Landau equation~\eqref{SL}.
\end{theorem}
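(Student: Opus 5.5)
The argument is the standard weakly nonlinear one. It rests on the hierarchy \eqref{wlin1}--\eqref{wlin3} and the Stuart--Landau equation \eqref{SL} derived above, and on the fact that near threshold the dynamics are slaved to the critical mode. Since $\sigma>0$ in the Turing region and $L>0$ by the supercriticality assumption, the first step is to analyse \eqref{SL} as an ODE for the complex amplitude. Write $A=|A|e^{i\phi}$. Then
\[
\frac{\partial |A|}{\partial T_2}=\sigma|A|-L|A|^3, \qquad \frac{\partial\phi}{\partial T_2}=0,
\]
because $\sigma$ and $L$ are real: they are computed from the real vectors $\bm{\varrho},\bm{\varrho}'$ via \eqref{sigmaL}. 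The phase is therefore a constant fixed by the initial data. The radial equation has the equilibria $0$ and $|A_\infty|=\sqrt{\sigma/L}$. The origin is unstable, since its linearisation is $\sigma>0$. The nonzero equilibrium is asymptotically stable, since its linearisation is $\sigma-3L|A_\infty|^2=-2\sigma<0$. Moreover, every solution with $|A(0)|>0$ converges monotonically to $|A_\infty|$, because the equation is separable (logistic type in $|A|^2$). The smallness of the initial perturbation guarantees $|A(0)|=O(1)$ after the rescaling by $\eta$, so that the expansion \eqref{ex2} is consistent.

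The second step is to reconstruct $\ww$ at leading order. By \eqref{ex2} and \eqref{w1}, $\ww=\eta\,(A\bm{\varrho}e^{ik_cx}+c.c.)+O(\eta^2)$. Choosing the free constant $z$ in \eqref{rho} real, and translating $x$ to absorb the constant phase $\phi$ (the problem is translation invariant under periodic boundary conditions), I would substitute $A\to A_\infty$. This gives $\ww\simeq 2\eta\bm{\varrho}|A_\infty|\cos(k_cx)$. The factor $2$ can be absorbed into the normalisation of $\bm{\varrho}$, and $\eta$ is the small parameter denoted $\eps$ in \eqref{sol_appr}, with $\eta^2=(d-d_c)/d_2$ after fixing $d_1=0$. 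This yields \eqref{sol_appr}.

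The hypothesis that only $k_c$ is unstable and compatible with the boundary conditions is what justifies the single-mode ansatz \eqref{w1}. All other admissible modes $n$ have $\sigma(n)<0$ by Theorem~\ref{propturininstabmono}, since $h$ has a nondegenerate minimum at $z^*$ (see \eqref{der_sec}). Hence their components decay exponentially on the fast scale $t$ and enter only through the slaved terms $\mathbf{w}_{20},\mathbf{w}_{22}$ at $O(\eta^2)$. The main obstacle I expect is making this slaving rigorous rather than formal. A full justification would need a center-manifold reduction for the nonlocal operator $\mathscr{L}$, with a spectral gap and sectoriality of $\mathscr{L}$ in a suitable periodic Sobolev space. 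I would obtain these from the Fourier symbol $-\mathrm{diag}(k^{\alpha_1},dk^{\alpha_2})$ with $\alpha_i>1$, which makes $\mathscr{D}$ generate an analytic semigroup. Within the paper's scope, I would state the result at the formal multiple-scales level and check it numerically. Rigour would be invoked only in the remark that the fractional Laplacians are self-adjoint \cite{Muratori2016,Petrosyan2020}, so that the Fredholm alternative used to derive \eqref{SL} holds.
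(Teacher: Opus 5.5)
Your proposal is correct and follows essentially the same route as the paper: the formal multiple-scales hierarchy gives the Stuart--Landau equation \eqref{SL}, the pattern amplitude is its stable nontrivial equilibrium $\sqrt{\sigma/L}$, and rigorous justification is deferred to a center-manifold reduction, for which the paper simply cites \cite{FGGLS2025}. One correction to your bookkeeping: the factor $2$ in $2\eta\bfrho|A_\infty|\cos(k_cx)$ cannot be absorbed by renormalising $\bfrho$, because rescaling $z\mapsto cz$ with $c>0$ in \eqref{rho} sends $L\mapsto c^{2}L$ and $|A_\infty|\mapsto|A_\infty|/c$, so the product $\bfrho|A_\infty|$ does not depend on the normalisation; with the convention \eqref{w1} the factor $2$ is therefore genuine, and the mismatch with \eqref{sol_appr} is a normalisation slip in the statement rather than something your argument can remove.
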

The above result can be rigorously justified by means of center manifold reduction, following the approach developed in~\cite{FGGLS2025}.

We conclude this Section by presenting numerical simulations illustrating how the diffusion power influences the nature of the Turing bifurcation.
\begin{figure}
	\centering
	{\includegraphics[width=0.45\textwidth]{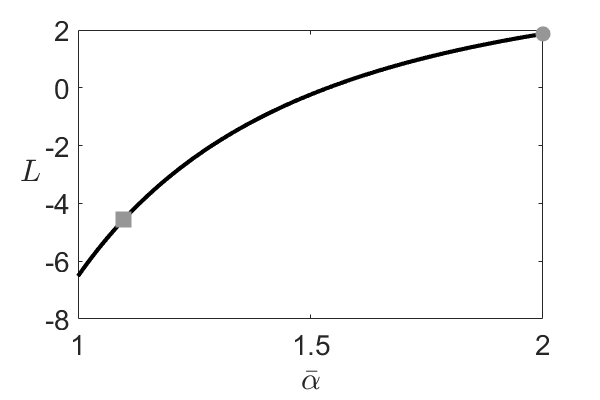}
	}
	\caption{Landau coefficient $L$ versus the diffusion exponent $\bar{\alpha}:=\alpha_1=\alpha_2$, for $a=-0.28$, $\beta=0.3$, $\gamma=2$, $\varepsilon=0.9$, and $\Gamma=13$. The sign change of $L$ marks the transition from strong superdiffusion to classical diffusion. The light gray dot and square markers correspond to the supercritical and subcritical patterns shown in Fig.~\ref{super_pattern_alpha1=alpha2=2_epssmall} and Fig.~\ref{sub_pattern_alpha1=alpha2=1_1_epssmall}, respectively.\label{Lvsalphabar}}
\end{figure}		
In Fig.~\ref{Lvsalphabar} we report the dependence of the Landau coefficient $L$ on the common diffusion exponent $\bar{\alpha}=\alpha_1=\alpha_2$. As $\bar{\alpha}$ decreases, corresponding to stronger superdiffusion, $L$ changes sign from positive to negative, indicating a transition from supercritical to subcritical bifurcation. This shows that anomalous diffusion systematically promotes subcriticality.

This effect is further illustrated in Fig.~\ref{super-sub_regions_alpha1=alpha2}, where we display the Turing instability regions in the $(a,\gamma)$-plane for fixed kinetic parameters and two different values of $\bar{\alpha}$. In the classical diffusion case ($\bar{\alpha}=2$), both supercritical and subcritical regimes are observed. When superdiffusion is enhanced ($\bar{\alpha}=1.1$), the subcritical region becomes larger, indicating a systematic, although not dominant, tendency towards subcriticality.
The corresponding patterns, shown in panels (b,d), confirm the transition from supercritical to subcritical behavior induced by stronger superdiffusion.

In all these simulations, the kinetic parameters and the distance from the instability threshold are kept fixed. In particular, the bifurcation parameter $d$ is chosen at the same relative distance from $d_c$, which is identical in both cases since it depends only on the ratio $\alpha=\alpha_1/\alpha_2=1$. Therefore, the observed changes are solely due to the variation of the diffusion exponent.

A similar tendency is observed also in asymmetric transport configurations, in which only one of the two species exhibits superdiffusive behavior while the other undergoes classical diffusion. In all the tested cases, the presence of anomalous transport in at least one component leads to an enlargement of the subcritical region, independently of whether it is associated with the activator or the inhibitor. This indicates that the promotion of subcriticality is a robust effect of superdiffusion, rather than a specific feature of the symmetric case $\alpha_1=\alpha_2$.

We have verified that the same qualitative behavior persists for other parameter sets, including regimes where Hopf bifurcation is absent and configurations in which only one species exhibits anomalous diffusion. In all cases, stronger superdiffusion favors subcriticality and enhances nonlinear effects.
\begin{figure*}\label{super-sub_regions_alpha1=alpha2}
	\subfigure[]{\includegraphics[width=0.4\textwidth]{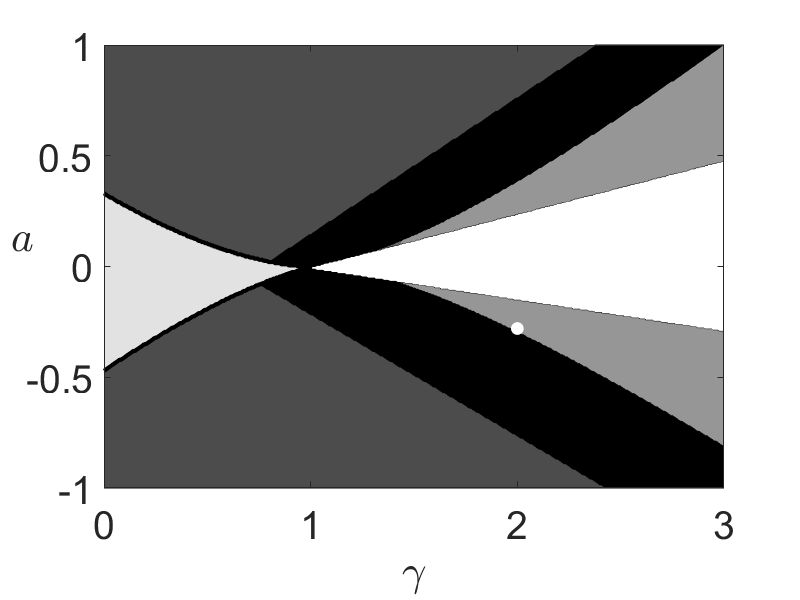}
		\label{super-sub_regions_alpha1=alpha2=2_epssmall}}
	\subfigure[]{\includegraphics[width=0.47\textwidth]{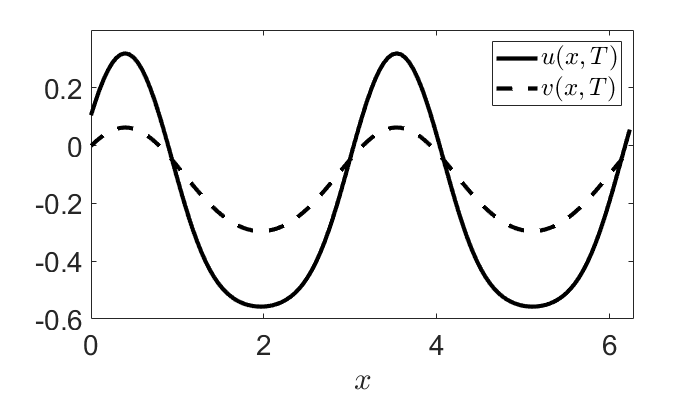}
		\label{super_pattern_alpha1=alpha2=2_epssmall}}
	\subfigure[]{\includegraphics[width=0.4\textwidth]{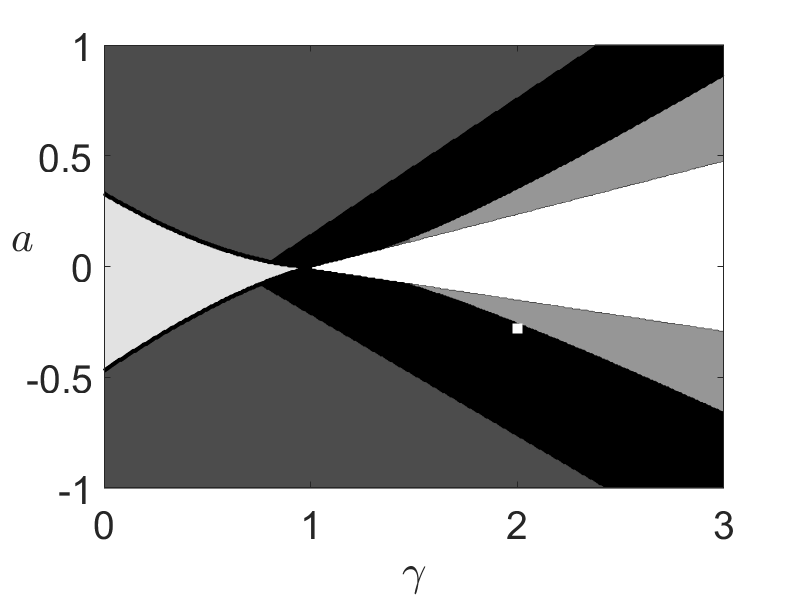}
		\label{super-sub_regions_alpha1=alpha2=1_1_epssmall}}
	\subfigure[]{\includegraphics[width=0.47\textwidth]{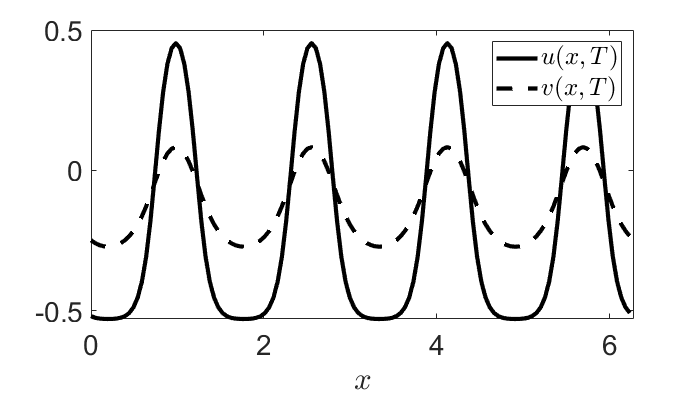}
		\label{sub_pattern_alpha1=alpha2=1_1_epssmall}}
	\caption{(a,c) Turing instability regions in the $(a,\gamma)$ parameter plane, with $\beta$, $\varepsilon$, and $\Gamma$ chosen as in Fig.~\ref{Lvsalphabar}. Panels (a,b) correspond to the classical diffusion case $\bar{\alpha}=2$, while panels (c,d) correspond to the superdiffusive case $\bar{\alpha}=1.1$. In (a,c), the light gray, dark gray, black, and medium gray areas represent the bistable, excitable, monostable subcritical, and monostable supercritical regimes, respectively.
		The dot and square markers in (a,c) correspond to the same markers in Fig.~\ref{Lvsalphabar} and indicate the parameter values for which the patterns in (b,d) are computed.
		(b,d) Turing patterns for $u$ and $v$ corresponding to the marked parameter values, with $d=d_c(1+\eta^2)$, $d_c=11.2367$, and $\eta=0.1$. The classical diffusion case (b) exhibits a supercritical bifurcation, whereas the superdiffusive case (d) leads to a subcritical bifurcation.
		\label{super-sub_patterns_epssmall}}
\end{figure*}

\section{Turing-Hopf bifurcation and its normal form}
\label{TH}
We have shown that the equilibrium $E^*$ may lose stability either through a Turing bifurcation (Theorem~\ref{propturininstabmono}) or through a Hopf bifurcation (Proposition~\ref{prophopf}), depending on the parameter values and on the anomalous diffusion exponents. As a consequence, the parameter space is divided into regions characterized by stationary patterns, temporal oscillations, or their coexistence.
In particular, when the thresholds for Turing and Hopf instabilities coincide, a codimension--2 Turing--Hopf bifurcation occurs. In a neighborhood of such a point, the emerging dynamics is influenced by both spatial and temporal instabilities. In the present fractional setting, this interaction is affected by the superdiffusive transport through the coefficients of the reduced amplitude equations.

In this Section, we study the interaction between Turing and Hopf modes near a codimension--2 bifurcation point. By means of a multiple-scales perturbation approach, we derive the associated normal form and determine the parameter regions in which different dynamical regimes arise. We also present numerical simulations illustrating the main theoretical predictions.

In analogy with the weakly nonlinear analysis developed in Section~\ref{WNL}, we investigate the dynamics in a neighborhood of the codimension--2 Turing--Hopf bifurcation point by means of a multiple-scales expansion. 

As in the purely Turing case, the slow temporal scales
$T_j$ are introduced in order to describe the slow modulation of the pattern amplitude. However, in contrast to the Turing bifurcation, near a Turing--Hopf point the equilibrium $E^*$ undergoes temporal oscillations associated with the Hopf instability. As a consequence, the fast time scale $T_0$
must also be retained in the expansion and
the time derivatives decompose as:
\begin{equation}
	\label{time0}
	\frac{\partial}{\partial t} = \frac{\partial}{\partial T_0} + \eta \frac{\partial}{\partial T_1} + \eta^2 \frac{\partial}{\partial T_2} + \eta^3 \frac{\partial}{\partial T_3} + \dots
\end{equation}

Without loss of generality, we choose the parameter $\varepsilon$ to measure the distance from the Hopf bifurcation and the parameter $d$ to measure the distance from the Turing bifurcation. Accordingly, the solution and the parameter $d$ are expanded as in \eqref{ex2} and \eqref{exd}, while $\varepsilon$ is expanded as follows:
\begin{equation}
	\varepsilon = \varepsilon_c + \eta \varepsilon^{(1)} + \eta^2 \varepsilon^{(2)} + \eta^3 \varepsilon^{(3)} + O(\eta^4).
	\label{expa}
\end{equation}
All the coefficients $\mathbf{w}_i$ now depend on the multiple time scales $T_j$, $j=0,1,2,\dots$.

Substituting \eqref{time}, \eqref{ex2},\eqref{exd} and \eqref{expa} into the system \eqref{origsyst_1}--\eqref{origsyst_2} and collecting terms at successive orders in $\eta$, we obtain a hierarchy of linear problems. At leading order, $\mathbf{w}_1$ satisfies:
\begin{equation}
	\label{linw1}
	\left(\frac{\partial}{\partial T_0}-\mathscr{L}^c\right) \mathbf{w}_1 = 0,
\end{equation}
where $\mathscr{L}^c$ is the operator defined in \eqref{linop}  evaluated at the codimension--two point $(\varepsilon_c,d_c)$.

The general solution of \eqref{linw1} is:
\begin{equation}
	\label{w1sol}
	\mathbf{w}_1 = A(T_k) e^{i k_c x} \mathbf{e}_1 + B(T_k) e^{i \Omega_c T_0} \mathbf{e}_2 + c.c.,
\end{equation}
where $A$ and $B$ are slowly varying complex amplitudes, and $c.c.$ denotes the complex conjugate. The critical wavenumber $k_c$ is given in \eqref{kc}, while:
\begin{equation}
	\label{om_val}
	\Omega_c = \gamma \sqrt{\det(J_c)}
\end{equation}
is the Hopf frequency at the codimension--two point.

The eigenvectors $\mathbf{e}_1 \in {\rm Ker}\left\{\Gamma J^{\varepsilon_c}-D_{k_c}^{d_c}\right\}$ and $\mathbf{e}_2 \in {\rm Ker}\left\{J^{\varepsilon_c} - i \Omega_c I\right\}$ are chosen as:
\begin{equation}
	\label{e1e2}
	\mathbf{e}_1 = \begin{pmatrix}
		\dfrac{J^{\varepsilon_c}_{22} - k_c^2 D^{d_c}_{22}}{J^{\varepsilon_c}_{21} - k_c^2 D^{d_c}_{21}} \\
		-1
	\end{pmatrix},
	\quad
	\mathbf{e}_2 = \begin{pmatrix}
		\dfrac{i \Omega_c - J^{\varepsilon_c}_{22}}{J^{\varepsilon_c}_{21}} \\
		1
	\end{pmatrix}.
\end{equation}
At  $O(\eta^2)$ we obtain the following linear equation for $\mathbf{w}_2$:
\begin{equation}\label{linw2}
	\left(\frac{\partial}{\partial T_0} - \mathscr{L}^c\right)\mathbf{w}_2
	= -\left(\frac{\partial}{\partial T_1} - \mathscr{L}^1\right)\mathbf{w}_1 + \mathbf{H},
\end{equation}
where:
\begin{eqnarray*}\label{L1F}
	\mathscr{L}^1 &=&
	\begin{pmatrix}
		0 & 0\\
		\Gamma \varepsilon^{(1)} \gamma & -\Gamma \varepsilon^{(1)} + d^{(1)} \nabla^{\alpha_2}
	\end{pmatrix}, \\
	\mathbf{H} &=&
	\Gamma
	\begin{pmatrix}
		u_1(\beta v_1- 3u^* u_1) \\
		0
	\end{pmatrix}.
\end{eqnarray*}
Equation \eqref{linw2} admits a solution if and only if the Fredholm alternative is satisfied. 
In order to eliminate secular terms arising from resonant components of the forcing, we impose 
$T_1=0$ and set $\varepsilon^{(1)}=d^{(1)}=0$, so that the solvability condition is automatically fulfilled.

Under these assumptions, the solution of \eqref{linw2} can be written in the form
\[
\begin{split}
	&\mathbf{w}_2 =\, |A|^2(\mathbf{w}_{200}^{(2)}+\mathbf{w}_{202} e^{2ik_c x})
	+|B|^2(\mathbf{w}_{200}^{(1)}+\mathbf{w}_{220} e^{2i\Omega_c T_0})\\
	&+
	AB\, \mathbf{w}_{211} e^{i(\Omega_c T_0 + k_c x)}
	+
	A\bar{B}\, \mathbf{w}_{2-11} e^{i(-\Omega_c T_0 + k_c x)}
	+ c.c.,
\end{split}
\]
where the coefficients $\mathbf{w}_{200}^{(i)}$, $i=1,2$, and $\mathbf{w}_{2j1}$, $j=\pm1$, 
are solutions of explicit linear systems obtained by collecting terms with the same spatio--temporal dependence in \eqref{linw2}. 
These systems are analogous to those arising in the weakly nonlinear analysis of Section~\ref{WNL} and can be computed in closed form. 
For brevity, their explicit expressions are not reported here.

At $O(\eta^3)$ we gets the following equation for $\mathbf{w}_3$:
\begin{equation}\label{linw3}
	\left(\frac{\partial}{\partial T_0} - \mathscr{L}^c\right)\mathbf{w}_3
	= -\left(\frac{\partial}{\partial T_2} - \mathscr{L}^2\right)\mathbf{w}_1 + \mathbf{P},
\end{equation}
where:
\begin{eqnarray*}\label{L2G}
	\mathscr{L}^2 &=&
	\begin{pmatrix}
		0 & 0\\
		\Gamma \varepsilon^{(2)} \gamma & -\Gamma \varepsilon^{(2)} + d^{(2)} \nabla^{\alpha_2}
	\end{pmatrix}, \\
	\mathbf{H} &=&
	\Gamma
	\begin{pmatrix}
		\beta u_2 v_1 - u_1(u_1^2+ \beta v_2- 6u^* u_2)  \\
		0
	\end{pmatrix}.
\end{eqnarray*}
Also in this case, secular terms arise in \eqref{linw3}. Imposing the solvability condition yields the following system of amplitude equations for $A$ and $B$:
\begin{eqnarray}\label{comp_eq1}
	\frac{\partial B}{\partial T_2} &=& \tilde{\sigma}_1 B - \tilde{L}_1 |B|^2 B + \tilde{\Omega}_1 |A|^2 B, \\
	\label{comp_eq2}
	\frac{\partial A}{\partial T_2} &=& \tilde{\sigma}_2 A - \tilde{L}_2 |A|^2 A + \tilde{\Omega}_2 |B|^2 A.
\end{eqnarray}
The explicit derivation of the coefficients $\tilde{\sigma}_i$, $\tilde{L}_i$, and $\tilde{\Omega}_i$, $i=1,2$ involves lengthy but straightforward algebraic computations, obtained by projecting the nonlinear terms onto the critical eigenspaces through the corresponding adjoint modes. These calculations follow the standard multiple--scale procedure diven in Section \ref{WNL} and they are omitted here.
The coefficients in \eqref{comp_eq1} are complex, whereas those in \eqref{comp_eq2} are real. Moreover, $\tilde{\sigma}_1$ depends linearly on $\varepsilon^{(2)}$, while $\tilde{\sigma}_2$ depends linearly on both $\varepsilon^{(2)}$ and $d^{(2)}$. All other coefficients are independent of $\varepsilon^{(2)}$ and $d^{(2)}$.

Writing the complex amplitudes $A$ and $B$ in polar form and separating real and imaginary parts, we finally obtain the normal form:
\begin{eqnarray}\label{normal1}
	\frac{\partial \rho_1}{\partial T_2} &=& \rho_1 \left( \sigma_1 - L_1 \rho_1^2 + \Omega_1 \rho_2^2 \right), \\
	\label{normal2}
	\frac{\partial \rho_2}{\partial T_2} &=& \rho_2 \left( \sigma_2 - L_2 \rho_2^2 + \Omega_2 \rho_1^2 \right), \\
	\label{normal3}
	\frac{\partial \vartheta_1}{\partial T_2} &=& \sigma'_1 - L'_1 \rho_1^2 + \Omega'_1 \rho_2^2, \\
	\label{normal4}
	\frac{\partial \vartheta_2}{\partial T_2} &=& 0,
\end{eqnarray}
where $\sigma_j = \Re(\tilde{\sigma}_j)$, $L_j = \Re(\tilde{L}_j)$, $\Omega_j = \Re(\tilde{\Omega}_j)$, $j=1,2$, while $\sigma'_1 = \Im(\tilde{\sigma}_1)$, $L'_1 = \Im(\tilde{L}_1)$, and $\Omega'_1 = \Im(\tilde{\Omega}_1)$.

The normal form system \eqref{normal1}--\eqref{normal4} admits the following four equilibrium solutions:
\begin{eqnarray*}
	O &:& \rho_1 = \rho_2 = 0, \label{ies} \\
	H &:& \rho_1^2 = \frac{\sigma_1}{L_1}, \quad \rho_2 = 0, \label{hb1} \\
	T &:& \rho_1 = 0, \quad \rho_2^2 = \frac{\sigma_2}{L_2}, \label{hb2} \\
	TH &:& \rho_1^2 = \frac{\sigma_1 L_2 + \sigma_2 \Omega_1}{L_1 L_2 - \Omega_1 \Omega_2}, 
	\qquad
	\rho_2^2 = \frac{\sigma_2 L_1 + \sigma_1 \Omega_2}{L_1 L_2 - \Omega_1 \Omega_2}.
	\label{qpsx}
\end{eqnarray*}
The trivial equilibrium $O$ corresponds to the spatially homogeneous steady state of the original system \eqref{origsyst_1}--\eqref{origsyst_2}.  
The solution $H$ represents a spatially homogeneous but time-periodic oscillation generated by a Hopf bifurcation.  
The solution $T$ corresponds to a stationary spatially periodic Turing pattern with critical wavenumber $k_c$.  
Finally, the solution $TH$ represents a mixed spatio-temporal state in which spatial modulation and temporal oscillations coexist.

In the following, we analyze the linear stability of the equilibria \eqref{ies}--\eqref{qpsx} and construct the corresponding bifurcation diagram in the plane $(d^{(2)}, \varepsilon^{(2)})$.

The Jacobian matrix associated with \eqref{normal1}--\eqref{normal4} evaluated at the trivial equilibrium $O$ has eigenvalues $\sigma_1$ and $\sigma_2$. Therefore, the equilibrium $O$ is stable if and only if:
\begin{equation}\label{0stable}
	\sigma_j < 0, \qquad j = 1,2.
\end{equation}

Let us now consider the equilibrium $H$. The eigenvalues of the Jacobian matrix evaluated at $H$ are $\lambda_1^H = -2\sigma_1$ and $\lambda_2^H = \sigma_2 + \sigma_1 \Omega_2 / L_1$. Therefore, the conditions for the existence and stability of $H$ are:
\begin{eqnarray}
	L_1 > 0, \label{H1stable1} \\
	\sigma_1 > 0, \label{H1stable2} \\
	\sigma_2 + \frac{\sigma_1 \Omega_2}{L_1} < 0. \label{H1stable3}
\end{eqnarray}
Along the line $\mathcal{S}_1$ defined by:
\begin{equation}\label{lineaS1}
	\sigma_1 = 0, \qquad \sigma_2 < 0,
\end{equation}
the trivial equilibrium $O$ loses stability and bifurcates into the solution $H$. This corresponds to a Hopf bifurcation of the original reaction--diffusion system \eqref{origsyst_1}--\eqref{origsyst_2}. Since $\sigma_1$ depends linearly only on $\varepsilon^{(2)}$, the line $\mathcal{S}_1$ is horizontal in the plane $(d^{(2)}, \varepsilon^{(2)})$.

We now turn to the equilibrium $T$. Its existence and stability require:
\begin{eqnarray}
	L_2 > 0, \label{H2stable1} \\
	\sigma_2 > 0, \label{H2stable2} \\
	\sigma_1 + \frac{\sigma_2 \Omega_1}{L_2} < 0. \label{H2stable3}
\end{eqnarray}
Along the critical line $\mathcal{S}_2$ defined by:
\begin{equation}\label{lineaS2}
	\sigma_2 = 0, \qquad \sigma_1 < 0,
\end{equation}
the trivial equilibrium $O$ loses stability and bifurcates into the Turing solution $T$. This corresponds to a stationary pattern-forming instability in the original system \eqref{origsyst_1}--\eqref{origsyst_2}. Since $\sigma_2$ depends linearly on both $d^{(2)}$ and $\varepsilon^{(2)}$, the line $\mathcal{S}_2$ is an oblique line in the plane $(d^{(2)}, \varepsilon^{(2)})$.

Finally, we consider the mixed-mode equilibrium $TH$. The characteristic polynomial of the Jacobian matrix evaluated at $TH$ reads $\lambda^2 - t \lambda + d$, where:
\begin{equation}\label{td}
	t = -2(L_1 \rho_1^2 + L_2 \rho_2^2), \qquad
	d = 4 \rho_1^2 \rho_2^2 (L_1 L_2 - \Omega_1 \Omega_2).
\end{equation}
The equilibrium $TH$ is stable if and only if $t < 0$ and $d > 0$. Using the explicit expressions \eqref{qpsx}, one finds that $TH$ exists and is stable provided that:
\begin{eqnarray}
	\sigma_1 L_2 + \sigma_2 \Omega_1 > 0, \label{Tstable1} \\
	\sigma_2 L_1 + \sigma_1 \Omega_2 > 0, \label{Tstable2} \\
	L_1 L_2 - \Omega_1 \Omega_2 > 0, \label{Tstable3} \\
	\sigma_1 L_2 (\Omega_2 + L_1) + \sigma_2 L_1 (\Omega_1 + L_2) > 0. \label{Tstable4}
\end{eqnarray}

From conditions \eqref{H1stable1}--\eqref{H1stable3} and \eqref{Tstable2}, we define the line $\mathcal{S}_3$:
\begin{equation}\label{lineaS3}
	\sigma_2 L_1 + \sigma_1 \Omega_2 = 0, \qquad L_1 > 0, \quad \sigma_1 > 0.
\end{equation}
Crossing $\mathcal{S}_3$, the Hopf solution $H$ loses stability and bifurcates into the mixed-mode solution $TH$. This corresponds to a Turing--Hopf bifurcation in the original system.

Similarly, from conditions \eqref{H2stable1}--\eqref{H2stable3} and \eqref{Tstable1}, we define the line $\mathcal{S}_4$:
\begin{equation}\label{lineaS4}
	\sigma_1 L_2 + \sigma_2 \Omega_1 = 0, \qquad L_2 > 0, \quad \sigma_2 > 0.
\end{equation}
Crossing $\mathcal{S}_4$, the Turing solution $T$ loses stability and bifurcates into the mixed-mode solution $TH$.

Notice that both $\mathcal{S}_3$ and $\mathcal{S}_4$ are straight lines in the plane $(d^{(2)}, \varepsilon^{(2)})$, since $L_1, L_2, \Omega_1, \Omega_2$ are independent of $d^{(2)}$ and $\varepsilon^{(2)}$, whereas $\sigma_1$ and $\sigma_2$ depend linearly on these parameters.

The effect of anomalous diffusion on the Turing--Hopf interaction is not structural but quantitative: the fractional exponents enter the problem through $k_c$, $\Omega_c$, and the normal form coefficients, thereby reshaping the bifurcation diagram and the relative size of the different dynamical regimes. 

In contrast to the pure Turing case, the dependence of the normal form coefficients on the fractional exponents is highly intricate. This prevents a clear separation of mechanisms and makes it difficult to identify simple trends in the variation of the instability boundaries. As a consequence, superdiffusion mainly affects the quantitative organization of the parameter space rather than the qualitative structure of the bifurcation scenario.

We now present a set of numerical simulations in order to validate the normal form analysis and to illustrate the different dynamical regimes arising in the neighborhood of the Turing--Hopf codimension-2 bifurcation.
\begin{figure*}
	\subfigure[]{\includegraphics[width=0.45\textwidth]{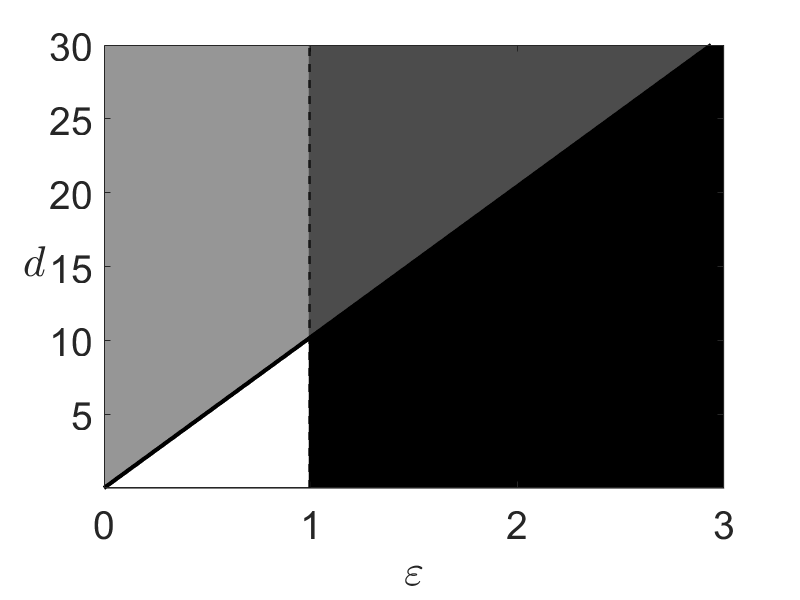}
		\label{TH_region}}
	\subfigure[]{\includegraphics[width=0.41\textwidth]{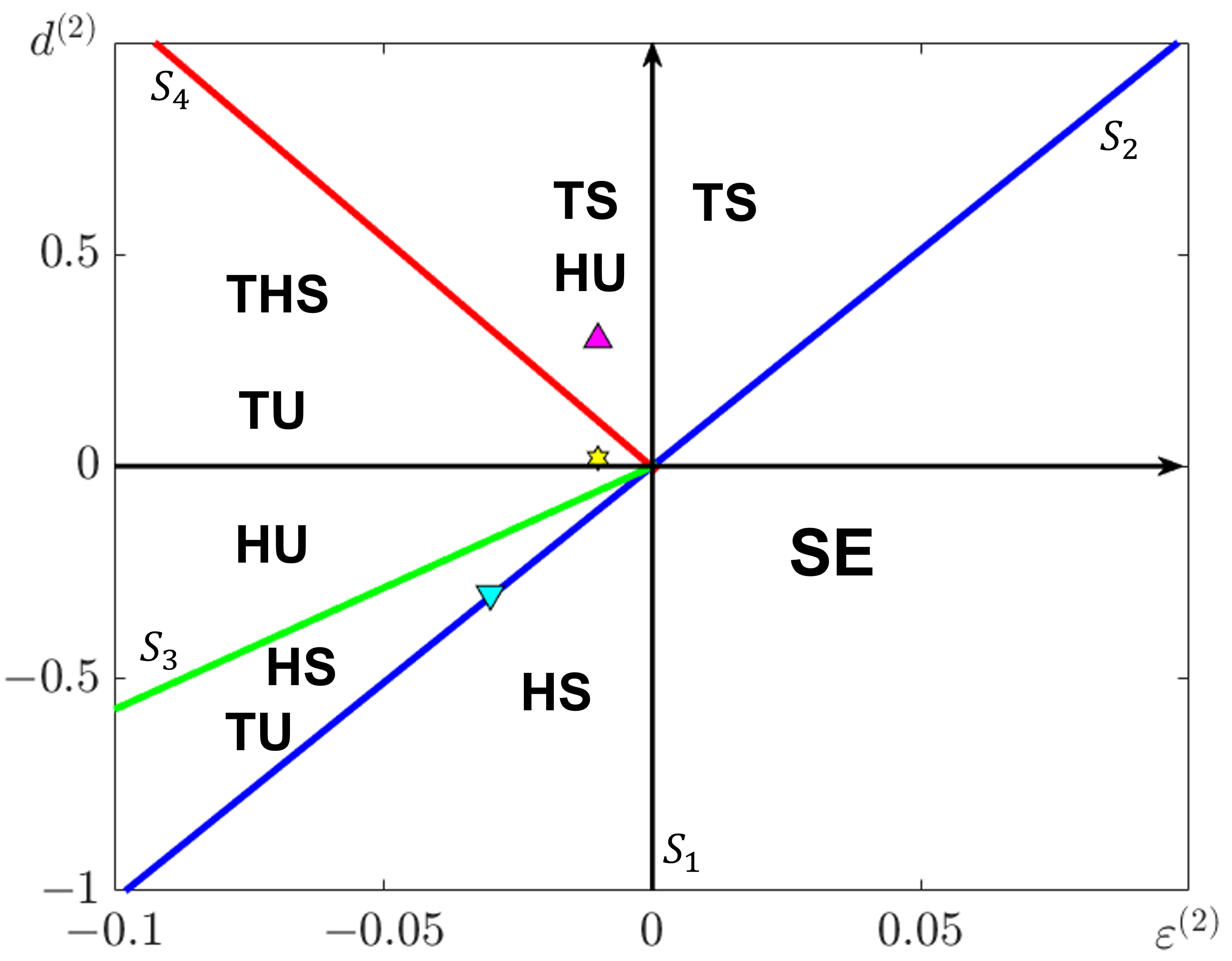}
		\label{TH_cond}}
	\caption{(a) Instability regions in the $(\varepsilon,d)$ parameter plane, where $\varepsilon$ and $d$ act as control parameters for Hopf and Turing instabilities, respectively. The white region corresponds to pure Hopf instability, the dark gray region to pure Turing instability, the light gray region to the Turing--Hopf interaction, and the black region to linear stability of the homogeneous equilibrium. Parameter values are $a=0.2$, $\beta=0.21$, $\gamma=3.1$, $\Gamma=13$, with $\bar{\alpha}=1.1$.
		(b) Bifurcation diagram in the $(\varepsilon^{(2)},d^{(2)})$ plane in a neighborhood of the codimension--2 Turing--Hopf point $(\varepsilon_c,d_c)=(0.9928,10.1398)$, obtained from the normal form \eqref{normal1}--\eqref{normal4}. The blue line represents the Turing bifurcation threshold (existence of solution $T$), while the vertical axis corresponds to the Hopf bifurcation threshold (existence of solution $H$). The green and red lines delimit the region where the mixed-mode solution $TH$ exists. These curves divide the parameter plane into regions characterized by the existence and stability of the equilibria $O$, $H$, $T$, and $TH$. The markers indicate the parameter values used in the numerical simulations shown in Fig.~\ref{TH_simul}.
	}
	\label{TH_fig}
\end{figure*}

In the first numerical experiment, the parameter values are chosen as specified in the caption of Fig.~\ref{TH_region}, where is represented the bifurcation structure associated with Turing and Hopf instabilities in the $(\varepsilon,d)$ parameter plane. The diagram identifies the regions of pure Turing and pure Hopf instability, together with the parameter domain in which both instabilities simultaneously arise.
For this parameter set the Turing and Hopf instability boundaries ntersect at the codimension--2 Turing--Hopf bifurcation point  $(\varepsilon_c, d_c) = (0.9928, 10.1398)$. 
Figure~\ref{TH_cond} reports the bifurcation lines $\mathcal{S}_1$--$\mathcal{S}_4$, predicted by the normal form analysis of system \eqref{normal1}--\eqref{normal4}, in the plane $(\varepsilon^{(2)}, d^{(2)})$. These lines partition the parameter plane into regions corresponding to the existence and stability of the equilibria $O$, $H$, $T$, and $TH$.

In particular, the blue line corresponds to the Turing bifurcation threshold (existence of solution $T$), while the vertical axis marks the Hopf bifurcation threshold (existence of solution $H$). The green and red lines delimit the region where the mixed-mode solution $TH$ exists and is stable. It is worth stressing that the position of the codimension--two point and the shape of the regions separated by the lines $\mathcal{S}_1$--$\mathcal{S}_4$ depend on the fractional diffusion exponents through the coefficients of the normal form. In particular, varying $\bar{\alpha}$ changes the relative size of the purely oscillatory, purely stationary, and mixed Turing--Hopf regions, even though the overall structure of the bifurcation diagram remains the same.

To validate the predictions of the normal form analysis, we perform direct numerical simulations of the full reaction--diffusion system \eqref{origsyst_1}--\eqref{origsyst_2}. The resulting spatio-temporal dynamics of the activator $u$ are displayed in Fig.~\ref{TH_simul}.

In Fig.~\ref{TH_simul}(a), the parameter values $(\varepsilon^{(2)}, d^{(2)}) = (-0.03, -0.3)$ correspond to the cyan downward triangle in Fig.~\ref{TH_cond}. In this region, only the Hopf equilibrium $H$ of the normal form exists and is stable. Accordingly, the solution exhibits spatially homogeneous temporal oscillations.

In Fig.~\ref{TH_simul}(b), the parameters $(\varepsilon^{(2)}, d^{(2)}) = (-0.01, 0.3)$ correspond to the magenta upward triangle in Fig.~\ref{TH_cond}. Here, only the Turing equilibrium $T$ is stable, and the solution converges to a stationary spatially periodic pattern.

Finally, in Fig.~\ref{TH_simul}(c), the parameters $(\varepsilon^{(2)}, d^{(2)}) = (-0.01, 0.02)$ correspond to the yellow star in Fig.~\ref{TH_cond}. In this region, the mixed-mode equilibrium $TH$ is stable. As predicted by the normal form, the solution displays a combined spatio-temporal behavior, characterized by spatial modulation together with temporal oscillations.

These numerical experiments confirm the validity of the weakly nonlinear analysis and show that the normal form correctly predicts not only the location of the bifurcation boundaries, but also the qualitative nature of the emerging dynamics in the different regions of parameter space.

\begin{figure*}
	\subfigure[]{\includegraphics[width=0.32\textwidth]{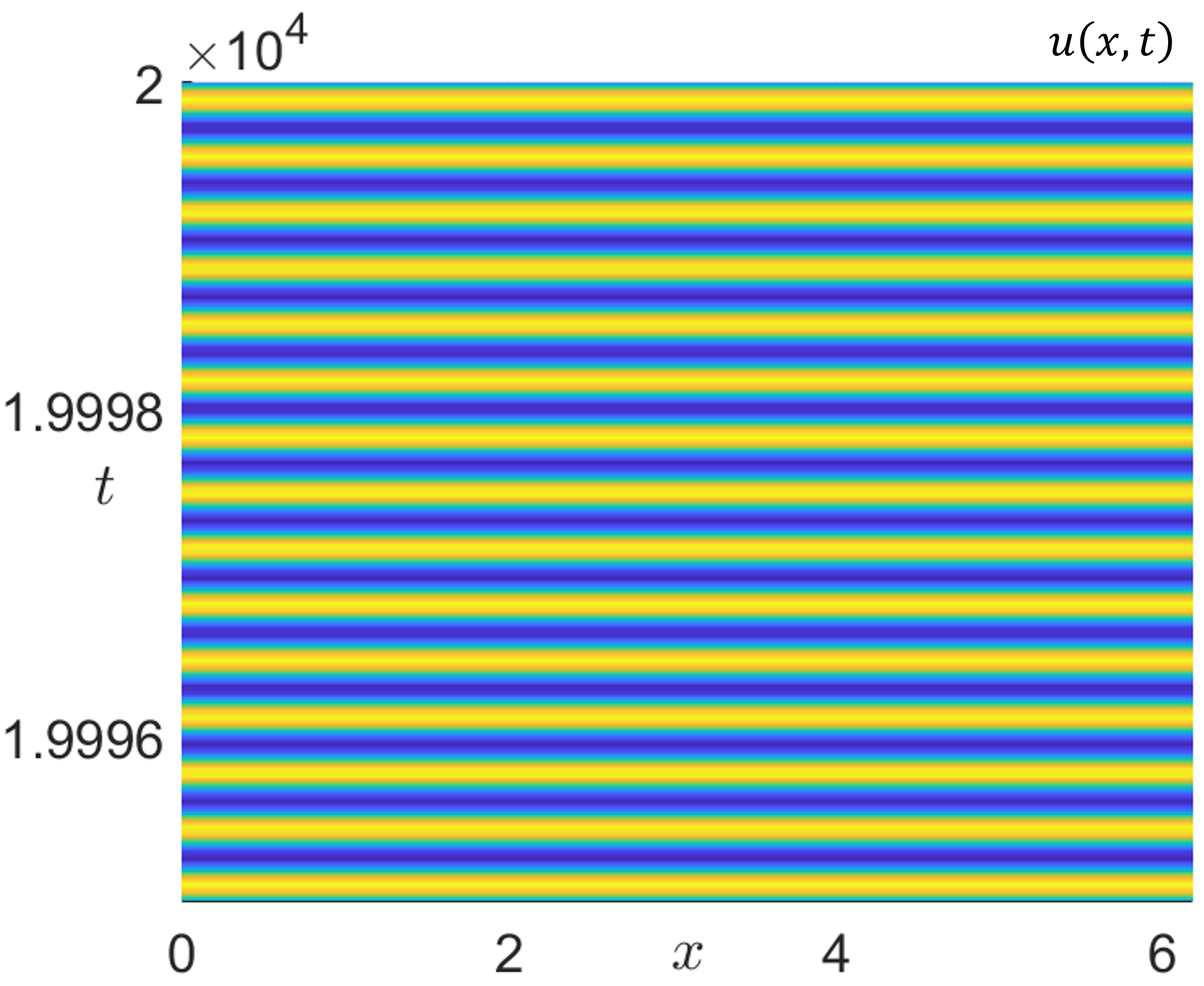}
		\label{u_H}}
	\subfigure[]{\includegraphics[width=0.32\textwidth]{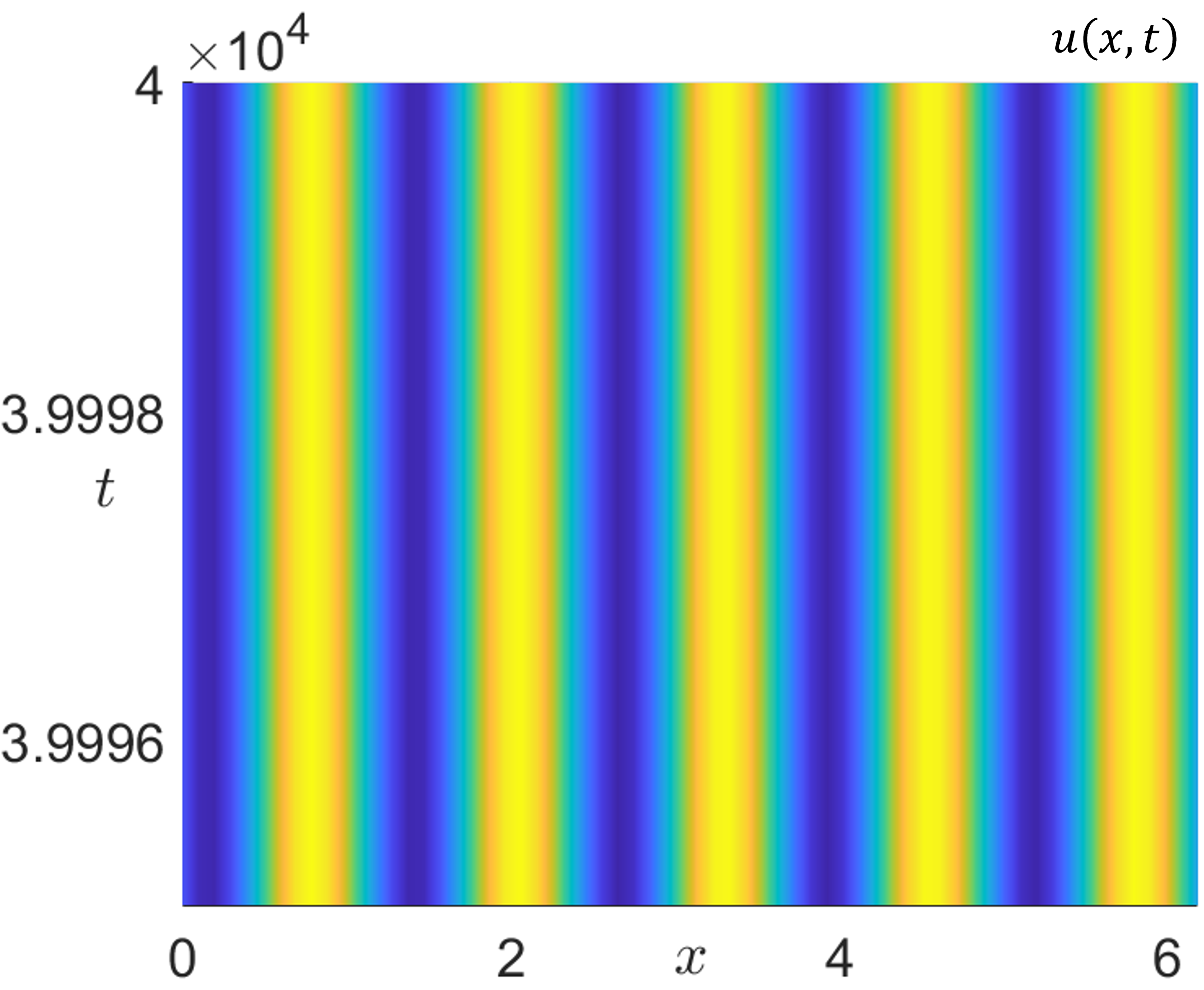}
		\label{u_T}}
	\subfigure[]{\includegraphics[width=0.32\textwidth]{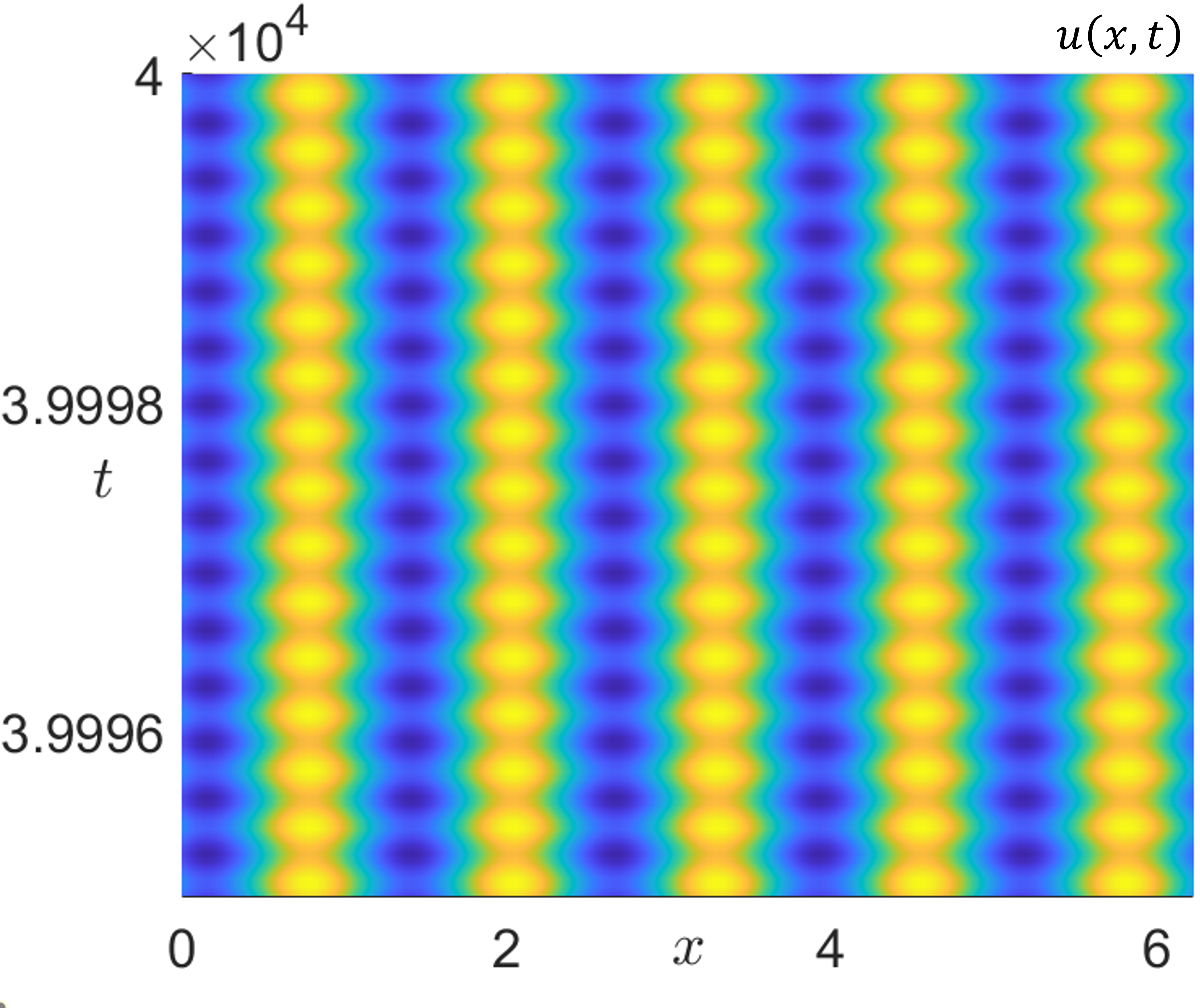}
		\label{u_TH}}
	\caption{Spatio-temporal solutions of system \eqref{origsyst_1}--\eqref{origsyst_2} in the vicinity of a codimension--2 Turing--Hopf bifurcation point. Parameter values are $a=0.2$, $\beta=0.21$, $\gamma=3.1$, $\Gamma=13$, with $\bar{\alpha}=1.1$.
		(a) Pure Hopf regime: the equilibrium $H$ of the normal form \eqref{normal1}--\eqref{normal4} exists and is stable. The parameters are $(\varepsilon^{(2)}, d^{(2)}) = (-0.03,-0.3)$, corresponding to the cyan downward triangle in Fig.~\ref{TH_cond}. The solution exhibits spatially homogeneous temporal oscillations.
		(b) Pure Turing regime: the equilibrium $T$ is stable. The parameters are $(\varepsilon^{(2)}, d^{(2)}) = (-0.01,0.3)$, corresponding to the magenta upward triangle in Fig.~\ref{TH_cond}. The solution converges to a stationary spatially periodic pattern.
		(c) Turing--Hopf regime: the mixed-mode equilibrium $TH$ is stable. The parameters are $(\varepsilon^{(2)}, d^{(2)}) = (-0.01,0.02)$, corresponding to the yellow star in Fig.~\ref{TH_cond}. The solution exhibits combined spatial modulation and temporal oscillations.
	}
	\label{TH_simul}
\end{figure*} 
This analysis shows that, also in the presence of anomalous diffusion, the FitzHugh--Nagumo system exhibits the classical interaction scenario between diffusion--driven and oscillatory instabilities. The superdiffusive transport does not destroy the Turing--Hopf mechanism, but reorganizes the parameter regions where stationary, oscillatory, and mixed spatio--temporal patterns occur. Together with the results of the previous sections, this completes the picture of pattern formation in the fractional FHN model, showing that anomalous diffusion reshapes not only stationary spatial structures but also the transition to spatio--temporal complexity.

\section{Conclusions}\label{conclusions}

In this work we have investigated diffusion-driven instabilities in a FitzHugh--Nagumo system with fractional transport and heterogeneous diffusion orders.

By deriving explicit analytical expressions for the instability threshold and the critical wavenumber, we have shown that the onset of instability is governed by the ratio of the diffusion exponents and by the kinetic parameters, rather than by their individual values. This property provides a natural framework for classifying instability regimes in anomalous reaction--diffusion systems.

When the diffusion orders differ, nonclassical instability mechanisms emerge. In particular, spatial instabilities may arise even in regimes where the activator diffuses faster than the inhibitor, as a consequence of the combined effect of anomalous scaling, diffusion rates and domain size. This behavior cannot be explained within the classical activator--inhibitor framework based on diffusion-rate separation.

The weakly nonlinear analysis shows that superdiffusion affects nonlinear saturation and promotes subcritical behavior. This enhances finite-amplitude effects and increases the sensitivity to perturbations.

The analysis of Turing--Hopf interactions shows that fractional transport shifts the boundaries between stationary and oscillatory instability regions.

From a broader perspective, our results highlight the importance of distinguishing diffusion rates from diffusion exponents when modeling transport in heterogeneous media. This distinction is relevant in biological, ecological and physical systems where long-range movements are present.

Future work will extend the present analysis to more general settings, including higher-dimensional domains and more realistic boundary and noise effects, in order to further clarify the role of anomalous transport in complex dynamical regimes. Studies of pattern formation on non-standard geometries, such as evolving or curved domains \cite{WBRGM2011,KEG2019,LFCD2022}, and analyses of Turing instability and pattern selection in two-dimensional reaction–diffusion systems \cite{GLS13,CCDT2022}, suggest complementary approaches that could be integrated with fractional transport effects. Additionally, coupling complex dynamical elements through fractional operators to investigate emergent spatio-temporal and potentially chaotic behavior presents an intriguing direction, as seen in coupled oscillator networks and related complex spreading dynamics \cite{ABCPPSV21,MARAM2024}.
It would be also interesting to explore whether the fractional transport mechanisms considered here can be derived from microscopic interaction rules via kinetic limits, as pursued in kinetic macroscopic studies of neural and chemotactic systems \cite{PST18,BGMT2025}.

\section{Acknowledgment}						
This work has been partially supported 
by the European Union – Next Generation EU: PRIN project
MUR PrinPNRR 2022, project code 
P202254HT8, Grant No. CUP B53D23027760001; by the PRIN project MUR PrinPNRR 2022, project code P2022Z7ZAJ, Grant No. CUP B53D23027930001 and by the project
MUR PRIN 2022-scorrimento project code 2022ALZH2K002, Grant No. CUP: B53C24006330006; by the Project {\it for single researcher} at University of Palermo, Engineering Department.
The authors also gratefully acknowledge the financial support of GNFM-INdAM and the FFR2024 grant of the University of Palermo.

	\bibliographystyle{plainnat}
	\bibliography{FHNBib}
	
\end{document}